\newtheorem{theorem}{Theorem}
\newtheorem{definition}{Definition}
\newtheorem{corollary}{Corollary}
\newtheorem{lemma}{Lemma}
\newtheorem{example}{Example}
\newtheorem{remark}[theorem]{Remark}
\newcommand{\twotriangle}{\hfill $\bigtriangleup \bigtriangleup$  }
\newcommand{\eax}{\twotriangle  \end{example}}
\newcommand\bim{\begin{itemize}}
\newcommand\eim{\end{itemize}}
\begin{document}

\title{
An Integrated Sensing and Communications System Based on Affine Frequency Division Multiplexing 
}

\author{
	Yuanhan Ni, {\it Member, IEEE}, Peng Yuan, Qin Huang, {\it Senior Member, IEEE}, Fan Liu, {\it Senior Member, IEEE}, Zulin Wang, {\it Member, IEEE}
	\thanks{This work was supported in part by the China Postdoctoral Science Foundation under Grant Number 2024M764088, in part by the National Natural Science Foundation of China under Grant 61971025, 62331002,	62071026 and 62331023. Part of this paper has been accepted by the 2022 International Symposium on Wireless Communication Systems (ISWCS)\cite{ni2022afdm}. (corresponding author: Zulin Wang and Qin Huang.)}
	\thanks{Y. Ni, P. Yuan, Q. Huang and Z. Wang are with the School of Electronic and Information Engineering, Beihang University, Beijing 100191, China (e-mail: yuanhanni@buaa.edu.cn; yuanpeng9208@buaa.edu.cn; qhuang.smash@gmail.com; wzulin@buaa.edu.cn).}
	\thanks{F. Liu is with the National Mobile Communications Research Laboratory, Southeast University, Nanjing 210096, China (e-mail: f.liu@ieee.org).}
}

\maketitle

\begin{abstract}
This paper proposes an integrated sensing and communications (ISAC) system based on affine frequency division multiplexing (AFDM) waveform. To this end, a metric set is designed according to not only the maximum tolerable delay/Doppler, but also the weighted spectral efficiency as well as the outage/error probability of sensing and communications. This enables the analytical investigation of the performance trade-offs of AFDM-ISAC system using the derived analytical relation among metrics and AFDM waveform parameters. Moreover, by revealing that delay and the integral/fractional parts of normalized Doppler can be decoupled in the affine Fourier transform-Doppler domain, an efficient estimation method is proposed for our AFDM-ISAC system, whose unambiguous Doppler can break through the limitation of subcarrier spacing. Theoretical analyses and numerical results verify that our proposed AFDM-ISAC system may significantly enlarge unambiguous delay/Doppler while possessing good spectral efficiency and peak-to-sidelobe level ratio in high-mobility scenarios.
\end{abstract}

\begin{IEEEkeywords}
Integrated sensing and communications, affine frequency division multiplexing, sensing spectral efficiency, AFDM-ISAC. 
\end{IEEEkeywords}

\section{Introduction}

Next-generation wireless systems (beyond 5G/6G) are expected to significantly improve spectral and energy efficiencies, support ubiquitous connections of everything and maintain reliable communications in high-mobility scenarios\cite{ni2022afdm,liu2020joint}. The integrated sensing and communications (ISAC) technique is one of the critical enablers of beyond 5G/6G due to its ability to simultaneously improve spectral and energy efficiencies and extract information of the environment \cite{liu2020joint}.

On the one hand, ISAC waveform design and corresponding signal processing play vital roles in the ISAC system. Multicarrier waveforms have been widely studied as ISAC waveforms due to their advantages of high communication spectral efficiency, robustness against multipath fading, and good ambiguity characteristics, etc. For example, the orthogonal frequency division multiplexing (OFDM) waveform has been used to simultaneously realize sensing and communications (S$\&$C) functionalities\cite{sturm2011waveform,zeng2020joint}. To this end, a symbol-wise division-based method has been proposed to suppress the sidelobes of radar images caused by random communications symbols and estimate the parameters of targets\cite{sturm2011waveform}. The unambiguous Doppler of this method is limited by the subcarrier spacing. The symbol division operation has been replaced with the symbol conjugate multiplication operation by utilizing the cyclic cross-correlation (CCC) to avoid amplifying the noise background when symbols have non-constant modulus in the frequency domain \cite{zeng2020joint}. However, for the OFDM-ISAC system, the bit error rate (BER) of communications and the peak-to-sidelobe level ratio (PSLR) of radar images may deteriorate severely in high-mobility scenarios\cite{Raviteja2018Interference,bemani2023affine}. 

To improve the performance of ISAC systems in high-mobility scenarios, orthogonal time-frequency space (OTFS)-based ISAC waveforms have been investigated  \cite{Raviteja2018Interference,Gaudio2020On}. OTFS waveform spreads information symbols in the delay-Doppler (DD) domain\cite{hadani2017orthogonal,Raviteja2018Interference}. Consequently, OTFS can deal with significant Doppler shifts and obtain both time and frequency diversities in doubly selective channels. Based on this, the OTFS-ISAC system has been considered, where an efficient algorithm has been proposed to estimate range and velocity\cite{Gaudio2020On}. It has been shown that OTFS can achieve similar sensing performance with OFDM and FMCW waveforms. 

{ To couple the communications channel and the modulated signal in the DD domain, a promising and practical delay-Doppler multi-carrier modulation, namely orthogonal delay-Doppler division multiplexing (ODDM), has been proposed\cite{lin2022orthogonal}. ODDM adopts the delay-Doppler domain orthogonal pulses (DDOP) and thus guarantees the orthogonality with respect to fine DD resolutions. The results in \cite{lin2022orthogonal} have verified that ODDM has advantages in out-of-band emission and BER compared with the traditional OTFS. Moreover, ODDM-ISAC systems have been studied to integrate the advantage of ODDM in the ISAC systems\cite{wang2024exploring,zhang2024multiuser}. Specifically, the estimated channel in the sensing process is used to improve the BER performance in the ODDM communication process, assuming that the sensing and communications channels are identical\cite{wang2024exploring}.} 

A new multicarrier waveform using chirp orthonormal basis, namely orthogonal chirp-division multiplexing (OCDM), has been proposed based on the discrete Fresnel transform (FrT)\cite{Ouyang2016Orthogonal}.
OCDM multiplexes a set of orthogonal chirps that are complex exponentials with linearly varying instantaneous frequencies. As each information symbol occupies the entire bandwidth, OCDM can achieve full diversity in frequency-selective channels, outperforming OFDM. 
It has been shown that the sidelobe level of the resulting radar image of the OCDM-ISAC system is slightly increased compared with the OFDM-ISAC system\cite{Oliveira2020An}. However, OCDM can only achieve partial diversity in doubly selective channels\cite{bemani2023affine}.    

Recently, another chirp multicarrier waveform, namely affine frequency division multiplexing (AFDM), has been proposed for communications systems by multiplexing information symbols in the affine Fourier transform (AFT) domain \cite{bemani2023affine}. AFDM waveform is able to separate all paths in the AFT domain by optimizing its parameters. As a result, AFDM can achieve full diversity in doubly selective channels. Moreover, the pilot-aided channel estimation and equalization algorithms in the AFT domain have been proposed for AFDM\cite{bemani2023affine,yin2022pilot}. It has been shown that compared with OTFS, AFDM has comparable communications performance in terms of BER but with lower complexity and the advantage of less channel pilot overhead\cite{bemani2023affine}. Then, the channel estimation scheme has been extended from the single antenna scenario to the MIMO scenario\cite{yin2022pilot}. An AFDM empowered sparse code multiple access (SCMA) system, referred to as AFDM-SCMA, has been proposed with significantly improved spectrum efficiency for massive connectivity in high mobility channels\cite{luo2024afdm}. The results in \cite{luo2024afdm} have shown that the proposed AFDM-SCMA significantly outperforms OFDM-SCMA in both uncoded and coded systems.
Moreover, benefitting from the fact that AFT is the genericized form of FrT and Fourier transform, AFDM waveform can be backwards compatible with traditional OFDM and chirp waveforms. 
Accordingly, it necessitates investigating the ISAC system based on AFDM waveform and the corresponding efficient method for estimating sensing parameters, especially in high-mobility scenarios\cite{ni2022afdm,bemani2024integrated,zhu2023low,zhu2024afdm}.

On the other hand, the performance metrics are the cornerstone of analyzing trade-offs and optimizing waveforms of ISAC systems.
For communications systems, the achievable communications spectral efficiency (CSE), the maximum tolerable delay/Doppler, and BER have been regarded as metrics to measure the efficiency, the operating range and the reliability of the system\cite{shitomi2022spectral}. Here, the operating range refers to the range of delay/Doppler within which the system can work reliably. 
The unambiguous range/velocity, the probability of detection, and the estimation accuracy have been used to measure the operating range and the reliability performance of sensing systems\cite{mark2010principles}. However, the above traditional metrics of S$\&$C systems usually have different structures and physical meanings, which makes it challenging to analyze the performance trade-off of S$\&$C and optimize ISAC waveforms. 

To this end, new metrics with similar structures and/or physical meanings for S$\&$C functionalities are researched mainly motivated by the information theory perspective\cite{liu2022integrated}. For example, radar capacity has been proposed to evaluate how much information of targets is attainable by a moving-target indication radar after performing a periodic search\cite{guerci2015joint}. The radar estimation rate has been introduced to measure how much uncertainty in the target is cancelled after estimating the delay or Doppler of the target\cite{chiriyath2015inner}. On top of that, an inner bound on radar estimation rate has been derived. The Cramér-Rao bound (CRB)-rate region has been used to characterize the fundamental trade-off between S$\&$C\cite{xiong2023fundamental}. An outer bound and various inner bounds on CRB-rate regions have been proposed. The conditional mutual information has been used as the unified performance metric for both S$\&$C, where traditional sensing metrics are connected with the sensing mutual information (MI) from a rate-distortion perspective\cite{dong2023rethinking}.
The trade-off between the ambiguity function of sensing and the achievable information rate of communications has been investigated by optimizing the input distribution utilizing the probabilistic constellation shaping method\cite{du2023reshaping}.  
{ The performances of the multi-cell ISAC network have been analyzed using the coverage and the ergodic rate as the unified performance metrics of S$\&$C\cite{gan2024coverage}.} 
Recently, a theoretical sensing rate has been introduced to measure the information obtained by a pulse-Doppler radar system\cite{choi2023information}. While there are initial efforts towards defining the theoretical ``capacity'' of sensing, it still necessitates investigating the achievable efficiency of sensing functionality for specific ISAC waveforms.

In this paper, we propose an AFDM-ISAC system, where the AFDM waveform is regarded as the ISAC waveform to realize S$\&$C functionalities simultaneously. For that, two new metrics, namely sensing spectral efficiency (SSE) and sensing outage probability (SOP) are introduced. Based on this, a metric set is designed according to the weighted CSE and SSE, the maximum tolerable delay/Doppler, and the weighted BER and SOP. Then, we analytically investigate the performance trade-offs of AFDM-ISAC systems using the derived analytical relation among metrics and AFDM waveform parameters. An efficient estimation method is proposed to extract the delay and the integral/fractional parts of normalized Doppler. Numerical results verify that the proposed AFDM-ISAC system can significantly enlarge maximum tolerable delay/Doppler and possess good spectral efficiency and PSLR in high-mobility scenarios, compared with the OFDM-ISAC system.
For clarity, we summarize our contributions as follows:
\begin{itemize}
	{ \item  We propose an AFDM-ISAC system. For that, we introduce SSE and SOP that can evaluate sensing efficiency and reliability (including detection and estimation processing) and then propose a metric set for ISAC systems. SSE and SOP have structures and physical meanings similar to the existing CSE and BER of communications. It is revealed that the traditional sensing performances (e.g., resolution, unambiguous delay/Doppler, probability of detection, probability of false alarm, and estimation error) can be reflected in SSE and SOP. Moreover, the trade-off between SSE and SOP for sensing is analogous to the trade-off between CSE and BER for communications.
	
	\item We derive the relation between the proposed metric set and AFDM parameters and analytically investigate the performance trade-offs of the AFDM-ISAC system. The derived relation and performance trade-offs reveal that the performances of AFDM-ISAC system are strongly correlated with AFDM parameters $c_1$, $N$ and $N_{cp}$. We also analytically demonstrate that the maximum tolerable Doppler of the proposed AFDM-ISAC system can be more than five times that of the OFDM-ISAC system with appropriate parameters. Moreover, the derived relation allows us to provide a design guideline on selecting AFDM parameters.
	
	\item We propose an efficient estimation method for our AFDM-ISAC system to estimate the delay and the Doppler of sensing targets in the AFT-Doppler domain. The range of estimated Doppler can break through the limitations of subcarrier spacing with an acceptable complexity.}	
\end{itemize} 

The rest of this paper is organized as follows. Section II introduces the system model. In Section III, we propose an AFDM-ISAC system and design a metric set. Performance analyses of AFDM-ISAC system are proposed in Section IV. Section V designs an estimation method of sensing parameters for the AFDM-ISAC system. Numerical results are shown in Section VI. Section VII concludes the paper.

\emph{Notation:} Throughout the paper, $\mathbf{X}$, $\mathbf{x}$, and $x$ denote a matrix, vector, and scalar, respectively.
$\left\lfloor \cdot \right\rfloor$, $\left\lfloor \cdot \right\rceil$ and $\left\lceil \cdot \right\rceil$ denote the floor, round and cell functions, respectively. ${\left\langle \cdot \right\rangle _N}$, $\delta\left[\cdot\right]$, $\odot$, $\left( \cdot \right)^{*}$ and $\left( \cdot \right)^{\rm H}$ are the modulo $N$ operation, the Dirac delta function, the Hadamard products, the conjugate operation, and the Hermitian transpose, respectively. $\mathrm{diag}\left(\mathbf{x}\right)$ returns a diagonal matrix with the element of $\mathbf{x}$ on the main diagonal.

\section{Preliminaries}

\subsection{Basic Concepts of AFDM}

We briefly review the basic concepts of AFDM proposed in \cite{bemani2023affine}. Let $\mathbf{x}$ denote an $N {\times} 1$ vector of quadrature amplitude modulation (QAM) symbols. To estimate the channel state information, $N_{pg}$ pilot and guard symbols should be inserted into $\mathbf{x}$\cite{bemani2023affine}. Hence, the length of valued data symbol vector $\mathbf{d}$ is given by $M = N - N_{pg}$. The $N$ points inverse discrete affine Fourier transform (IDAFT) is performed to map $\mathbf{x}$ from the AFT domain to the delay domain, i.e.,\cite{bemani2023affine}
\begin{equation}
	\mathbf{s}\left[ {n} \right] = \frac{1}{{\sqrt N }}\sum\nolimits_{m = 0}^{N - 1} {\mathbf{x}\left[ {m} \right]} {e^{j2\pi \left( {{c_1}{n^2} + \frac{1}{N}mn + {c_2}{m^2}} \right)}} ,
\end{equation}
where $c_1$ and $c_2$ are the AFDM parameters, and $n = 0, \ldots ,N {-} 1$. Then, a chirp-periodic prefix (CPP) with a length of $N_{cp}$ is added,
which is defined as\cite{bemani2023affine}
\begin{equation}\label{eq:symbol_AFDM}
\mathbf{s}\left[ {n} \right] {=} \mathbf{s}\left[ {n {+} N} \right]{e^{ - i2\pi {c_1}\left( {{N^2} + 2Nn} \right)}},n =  - {N_{cp}}, \ldots , - 1.
\end{equation}

Considering a communications channel with $P$ paths, the gain
coefficient, time delay and Doppler shift of the $i$-th path are denoted by ${h_i}$, ${\tau _i}$, ${f_{d,i}}$, respectively.
After transmission over the channel, the received signal vector in the time domain is given by \cite[Eq. (6)]{wu2022integrating}
\begin{equation}\label{eq:received_sig_time}
	\mathbf{r}\left[ n \right] = \sum\nolimits_{i = 1}^P {{{\tilde h}_i}} \mathbf{s}\left[ {n - {l_i}} \right]{e^{j2\pi {f_i}n}} + {\mathbf{w}}_\mathrm{t}\left[ n \right],
\end{equation}
where ${{\tilde h}_i} {=} {h_i}{e^{ - j2\pi {f_{d,i}}{\tau _i}}}$, $n \in \left[ { - {N_{cp}},N - 1 } \right]$, $\mathbf{{w}}_\mathrm{t}\sim \mathcal {CN}\left( {0\text{,} \sigma_{c} ^2\mathbf{I}} \right)$ is an additive Gaussian noise vector in the time domain, ${l_i} {=} {{{\tau _i}} \mathord{\left/
		{\vphantom {{{\tau _i}} {{t_s}}}} \right.
		\kern-\nulldelimiterspace} {{t_s}}}$, ${f_i} {=} {f_{d,i}}{t_s}$, and ${t_s}$ denotes the sampling interval. 
 After discarding CPP and performing $N$ points discrete affine Fourier transform (DAFT), the input-output relationship in the AFT domain can be written in the matrix form as\cite{bemani2023affine}
\begin{equation}
	\mathbf{y} = {\mathbf{H}_\mathrm{eff}}\mathbf{x} + \mathbf{{ w}}_\mathrm{a}= \sum\nolimits_{i = 1}^P {{\tilde h_i}{\mathbf{H}_{\mathrm{A},i}}\mathbf{x}} + \mathbf{{ w}}_\mathrm{a},
\end{equation}
where ${\mathbf{H}_\mathrm{eff}} {=} \mathbf{A}{\mathbf{H}_\mathrm{c,t}}\mathbf{A}^{\rm{H}}$ with ${\mathbf{H}_\mathrm{c,t}} = \sum\nolimits_{i = 1}^P {{\tilde h_i}{\mathbf{\Gamma} _i}{\mathbf{\Delta} _{{f_i}}}{\mathbf{\Pi} ^{\left({l_i}\right)}}}$ being the communications channel matrix in the time domain, $\mathbf{\Pi}$ is the forward cyclic-shift matrix, ${\mathbf{\Gamma} _i} = \mathrm{diag}\left( {\left\{ {\begin{array}{*{20}{c}}
			{{e^{ - i2\pi {c_1}\left( {{N^2} - 2N\left( {{l_i} - n} \right)} \right)}}},&{n < {l_i}},\\
			1,&{n \ge {l_i}}.
	\end{array}} \right.} \right)$, ${\mathbf{\Delta} _{{f_i}}} = \mathrm{diag}\left( {{e^{ i2\pi {f_i}n}},n \in \left[0, {N-1}\right]} \right)$, ${\mathbf{H}_{\mathrm{A},i}}=\mathbf{A}{\mathbf{\Gamma} _i}{\mathbf{\Delta} _{{f_i}}}{\mathbf{\Pi} ^{\left({l_i}\right)}}\mathbf{A}^{\rm{H}}$, $\mathbf{A} = {\mathbf{\Lambda} _{{c_2}}}\mathbf{F}{\mathbf{\Lambda} _{{c_1}}}$,  $\mathbf{F}$ is the discrete Fourier transform (DFT) matrix, ${\mathbf{\Lambda} _{{c_i}}} {=}  \mathrm{diag}\left( {{e^{ - j2\pi c_i{n^2}}},n = 0, \ldots ,N {-} 1}, i=1,2 \right)$, and $\mathbf{{ w}}_\mathrm{a}=\mathbf{A}\mathbf{w}_\mathrm{t}$. ${\mathbf{H}_{\mathrm{A},i}}\left[ {p,q} \right]$ is given by\cite{bemani2023affine}
\begin{eqnarray}
	{\mathbf{H}_{\mathrm{A},i}}\left[ {p,q} \right] = \frac{1}{N}{e^{j\frac{{2\pi }}{N}\left( {N{c_1}l_i^2 - ql_i + N{c_2}\left( {{q^2} - {p^2}} \right)} \right)}}\mathcal{F}_i\left[ {p,q} \right],
\end{eqnarray}
where $p$ and $q \in \left[ 0,N-1 \right]$, ${\mathcal{F}_i}\left[ {p,q} \right] {=} \frac{{{e^{ - j2\pi \left( {p - q - {\nu_i} + 2N{c_1}{l_i}} \right)}} - 1}}{{{e^{ - j\frac{{2\pi }}{N}\left( {p - q - {\nu_i} + 2N{c_1}{l_i}} \right)}} - 1}}$ with ${\nu _i} = N{f_i} = \frac{{{f_{d,i}}}}{{\Delta f}} = {\alpha _i} + {a_i} \in \left[ { - {\nu _{\max }}, {\nu _{\max }}} \right]$ being the Doppler shift normalized with respect to the 
subcarrier spacing ${\Delta f}$, and ${\alpha _i} {\in} \left[ { - {\alpha _{\max }},{\alpha _{\max }}} \right]$ and ${a_i} {\in} \left( { - \frac{1}{2},\frac{1}{2}} \right]$ represent the integral and fractional part of ${\nu _i}$, respectively. 

In the integral normalized Doppler shift case, i.e., $a_i{=}0$, there is only one non-zero element in each row of $\mathbf{H}_{\mathrm{A},i}$, i.e., 
\begin{equation}
	{\mathbf{H}_{\mathrm{A},i}}\left[ {p,q} \right] = \left\{ {\begin{array}{*{20}{c}}
			\hspace{-0.2cm}{{e^{j\frac{{2\pi }}{N}\left( {N{c_1}l_i^2 - ql_i + N{c_2}\left( {{q^2} - {p^2}} \right)} \right)}}},\hspace{-1.5ex}&{q = {{\left\langle {p + lo{c_i}} \right\rangle}_N}},\\
			0,&{otherwise},
	\end{array}} \right. 
\end{equation}
where $lo{c_i} {=} {{{\left\langle {2N{c_1}{l_i} {-} {\alpha _i}} \right\rangle}_N}}$. Hence, the input-output relation in the AFT domain can be rewritten as 
\begin{align}\label{eq:in_out_relation_2}
	&\mathbf{y}\left[ {p} \right] \hspace{-0.5ex} = \hspace{-0.5ex}\sum\limits_{i = 1}^P {{\tilde h}_i}
	{e^{j\frac{{2\pi }}{N}\left( N{c_1}l_i^2 {- q_il_i + N{c_2}\left( {{{q}_i^2} - {p^2}} \right)} \right)}}\mathbf{x}\left[ {q_i} \right] \hspace{-0.5ex}{+} \mathbf{ w}_\mathrm{a}\left[ {p} \right],
\end{align}
where ${q_i {=} {{\left\langle {p + lo{c_i}} \right\rangle}_N}}$ and $p\in \left[0,{N-1}\right]$.
In the fractional normalized Doppler shift case, 
there are $2k_v {+} 1$ non-zero elements and the peak is still at ${q {=} {{\left\langle {p + lo{c_i}} \right\rangle}_N}}$ in each row of $\mathbf{H}_{\mathrm{A},i}$\cite{bemani2023affine}. 

\section{AFDM-ISAC System and A Metric Set}

{ This section proposes an AFDM-ISAC system. Specifically, an AFDM-based generic ISAC waveform is first shown. Then, a metric set is designed for the AFDM-ISAC system, in which we introduce two new metrics, namely SSE and SOP, to evaluate the efficiency and the reliability of sensing, respectively.}

\subsection{AFDM-Based Generic ISAC Waveform}\label{AFDM_waveform}

We consider a monostatic AFDM-ISAC setup, where the ISAC base station (BS) transmits the AFDM-based ISAC waveform to downlink users and simultaneously receives echoes reflected by targets around it, as shown in Fig. \ref{fg:AFDM_based_ISAC}.
\begin{figure}
	\centering	
	\includegraphics[width=2.0in]{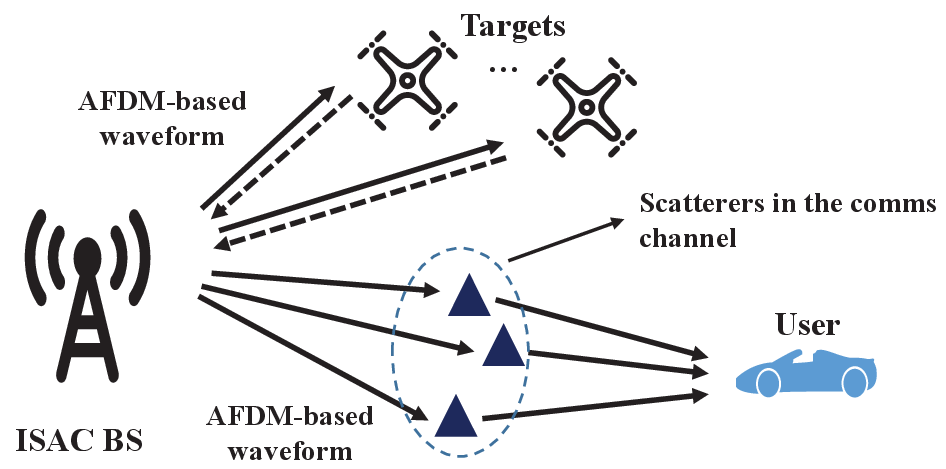}
	\caption{The ISAC scenario using AFDM waveform.
		\label{fg:AFDM_based_ISAC}}
		\vspace*{-5pt} 
\end{figure}

The framework of proposed AFDM-based generic ISAC waveform is shown in Fig. \ref{fg:generic_waveform_framework}, where $N_{sym}$ AFDM symbols are packed as an ISAC waveform (frame) $\mathbf{{s}}_\mathrm{I} \in {\mathbb{C}^{{N_sN_{sym}} \times 1}}$ in the time domain, which is used to carry $N_{sym}Mlog_2\left(M_{mod}\right)$ communications bits and perform a sensing (detection and estimation) process, where $N_s=N+N_{cp}$. 
The proposed generic ISAC waveform has configurable parameters $M_{mod}$, $M$, $N$, $c_1$, $c_2$, $N_{cp}$, and $N_{sym}$.  
By configuring these parameters, the proposed ISAC waveform can not only generate traditional single-carrier ($c_1 = c_2 = 0$, $N=1$), OFDM ($c_1 = c_2 = 0$), OCDM ($c_1 = c_2 = \frac{1}{N}$), and LFM ($c_2 = 0$, $M=1$) ISAC waveforms, but also achieve performance trade-off between S$\&$C.
Based on the previous definition of AFDM symbol in (\ref{eq:symbol_AFDM}), the expression of AFDM-based genaral ISAC waveform $\mathbf{s}_I$ can be written as \begin{align}\label{eq:s_n3}
	&\mathbf{s}_\mathrm{I}\left[ n \right]=  \frac{1}{{\sqrt N }}\sum\limits_{k = 0}^{N_{sym} - 1}\sum\limits_{m = 0}^{N - 1} {{\mathbf{X}\left[ m,k \right]}} g\left({n}-kN_s\right)  \nonumber \\
	& \quad \quad \quad \quad \times {e^{j2\pi \left( {{c_1}{\left({n}-kN_s\right)^2}{\rm{ + }}{c_2}{m^2} + \left({n}-kN_s\right)m/N} \right)}},
\end{align}
where $\mathbf{X} =\left[\mathbf{x}_0, \cdots, \mathbf{x}_{N_{sym}-1} \right]\in {\mathbb{C}^{{N} \times N_{sym}}}$ denotes the communications symbols block, 
$g\left( n \right) = \left\{ {\begin{array}{*{20}{c}}
		{1,}&{n \in \left[ { - {N_{cp}},N - 1} \right],}\\
		{0,}&{otherwise.}
\end{array}} \right.$ being the window function, and $n \in \left[ { - {N_{cp}},N - 1 + \left( {{N_{sym}} - 1} \right){N_s}} \right]$.

\begin{figure}[htbp] 	
	\centerline{\includegraphics[width=3.5 in]{./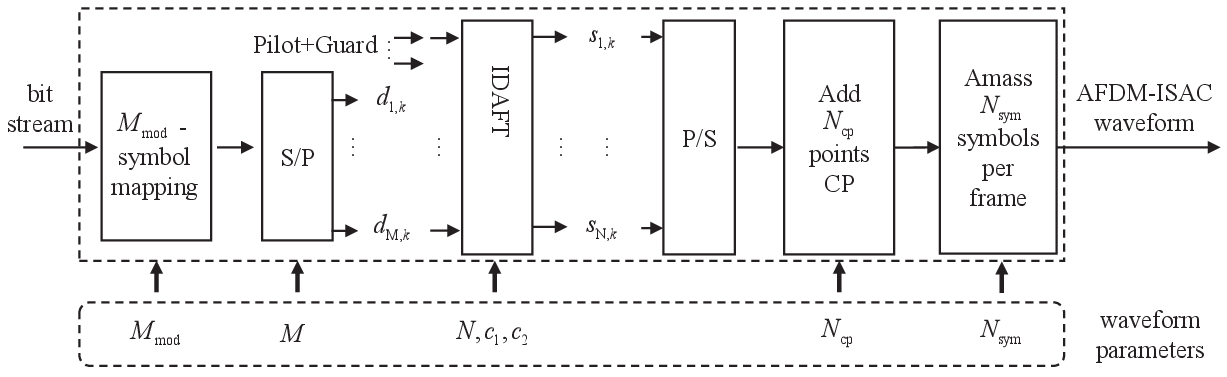}}
	\caption{The proposed AFDM-based generic ISAC waveform.}
	\label{fg:generic_waveform_framework}	
	\vspace*{-5pt} 
\end{figure}

\subsection{A Metric Set for AFDM-ISAC System} \label{sec_metric}

{ In this subsection, we commence by briefly reviewing the performance metrics of communications functionality. Then, analogous to communications, we introduce two new metrics, namely SSE and SOP, to evaluate the efficiency and the reliability of sensing functionality. Finally, a metric set for the ISAC system is proposed from the efficiency, operating range and reliability perspectives to comprehensively evaluate the performances of S$\&$C. Since there are many variables in this subsection, for clarity, the main variables in this subsection are listed in Table \ref{tab:variable_tab}.
}

{
\begin{table}
	\centering
	\caption{{Main Variables in Section III-B}}\label{tab:variable_tab}
	\renewcommand\arraystretch{1.3}
	\begin{threeparttable}
		\begin{tabular}{|c|l|}
			\hline
			\textbf{{Variable}}                       &\textbf{{Meaning}}                       \\ \hline
			$R_{\max}$/$V_{\max}$ &  {The maximum unambiguous range/velocity}    \\ \hline
			$\tau _{\max}$/$f_{d,\max}$ &  {The maximum tolerable delay/Doppler}     \\ \hline
			$U^{n,k} \in \left\{ 0,1 \right\}$ &  {Absent or present of the target in the $\left(n,k\right)$-th cell}     \\ \hline			
			${{\tau }^{n,k}}$/${{f}_{d}^{n,k}}$ &  {Actual delay/Doppler of the target in the $(n,k)$-th cell}     \\ \hline
			${{\bar \tau }^{n,k}}$/${{\bar f}_{d}^{n,k}}$ &  {Delay/Doppler of the centre of the $(n,k)$-th cell}   \\ \hline
			$\tilde \tau ^{n,k}$/$\tilde f_{d}^{n,k}$ &  {Observed delay/Doppler of the target in the noisy case}     \\ \hline			
			${\hat U}^{n,k}$/${{\hat \tau }^{n,k}}$/${{\hat f}_{d}^{n,k}}$ &  {Discretely estimated values of $U^{n,k}$/${{\tau }^{n,k}}$/${{f}_{d}^{n,k}}$   }  \\ \hline			
			$\sigma _{\tau}^{n,k} $/$\sigma _{f_d}^{n,k}$ &  {Lower bound of delay/Doppler estimation}     \\ \hline			
			$D_{\tau }^{n,k}$/$D_{f_d }^{n,k}$ &  {Interval of delay/Doppler of sub-cell of $(n,k)$-th cell }    \\ \hline			
			$\bar D_{R}$ / $\bar D_{V}$ &  {Given requirement of estimation error of range/velocity }    \\ \hline			
			$\bar D_{\tau }$/$\bar D_{f_d }$ &  {Given requirement of estimation error of delay/Doppler }    \\ \hline
			$P_{D}^{n,k}$/$P_{fa}$ &  {Probability of detection/false alarm in the $(n,k)$-th cell }   \\ \hline	
			${\eta _{sen}}$/${P_{e,sen}}$ &  {Defined SSE/SOP}     \\ \hline
			$P_{e,sen}^{\min }$/$P_{e,sen}^{\max }$ &  {Lower/Upper bound of SOP }    \\ \hline
		\end{tabular}
	\end{threeparttable}
\end{table}
}

Firstly, we review the performance metrics of communications. CSE can be used to measure the efficiency of communications, which evaluates practically transmitted information bit per time and frequency resources for a given waveform \cite{feher19871024,shitomi2022spectral}. 
For a given waveform with the time interval $T_A$ and bandwidth $B$, the maximum number of valued data symbols $N_{data}$ is given by $N_{data} = MN_{sym}$ according to \ref{AFDM_waveform}. Assume that each data symbol is chosen from the $M_{mod}$-QAM constellation. Therefore, the achievable CSE of the waveform is expressed as
\begin{eqnarray}\label{eq:CSE}
	\eta _{com} = \frac{{1}}{{B T_A}}N_{data}{\log _2}\left( M_{mod} \right).
\end{eqnarray}

The maximum tolerable delay/Doppler of channel and the delay/Doppler resulotion of channel can be regarded as the metrics to measure the operating range of communications. Specifically, the maximum tolerable delay/Doppler measures the maximum channel delay/Doppler that the waveform can tolerate, determining the upper bound on the operating range. The delay/Doppler resolution measures the ability to resolve the smallest channel delay/Doppler, representing the lower bound on the operating range. 

Due to the presence of noise, the received symbols may contain errors. BER can be used to measure the reliability of communications. 
If the receiver wants to achieve the CSE with an acceptable BER, it needs sufficient communication signal-to-interference-plus-noise ratio (SINR) as a guarantee. 

Motivated by communication metrics, we next define two new metrics, i.e., SSE and SOP, to measure the efficiency and the reliability of sensing for the given waveform, which have similar structures and physical meanings to the existing CSE and BER of communications, respectively.

\vspace{1ex}
\noindent\emph{(1) Proposed SSE}
\vspace{1ex}

{ We first define SSE to measure the efficiency of sensing functionality.}
Different from communication, the sensing transmitter sends a waveform that does not carry any information about sensing targets. After being scattered by targets, the echo carries continuous target parameter information. The sensing receiver detects whether the target exists and then estimates parameters to obtain information about sensing targets. 

For a given waveform with the time interval $T_A$ and bandwidth $B$, the maximum number of detectable targets is equal to the number of delay-Doppler resolution cells, which is given by\cite{mark2010principles}
\begin{eqnarray}\label{eq:N_cell}
	N_{cell} = \frac{{R_{\max }}}{{{\Delta _R}}}\frac{{2{v_{\max }}}}{{{\Delta _V}}} = \frac{{\tau _{\max }}}{{{\Delta _ \tau}}}\frac{{2{f_{d,\max }}}}{{{\Delta _{f_d}}}} = N_{\tau}N_{f_d} {,}
\end{eqnarray}
where $N_{\tau} = {{{\tau _{\max }}} \mathord{\left/
		{\vphantom {{{\tau _{\max }}} {{\Delta _\tau }}}} \right.
		\kern-\nulldelimiterspace} {{\Delta _\tau }}}$, and $N_{f_d} = {{2{f_{d,\max }}} \mathord{\left/
		{\vphantom {{2{f_{d,\max }}} {{\Delta _{{f_d}}}}}} \right.
		\kern-\nulldelimiterspace} {{\Delta _{{f_d}}}}}$.

Following \cite{choi2023information}, we define a binary random variable $U^{n,k} \in \left\{ 0,1 \right\}$ with Bernoulli parameter $\gamma$ to denote the target presence at the $(n, k)$ delay-Doppler cell, i.e., ${\Pr}\left(U^{n,k}=1\right)=\gamma$. 
When the Neyman-Pearson (NP) detector is adopted, the obtained target information by detecting whether a target is present in the $(n, k)$ cell can be expressed as\cite{choi2023information}
\begin{align}\label{eq:info_det}
	&I\left( U^{n,k}; {\hat U}^{n,k} \right) = H_b\left( {\left( {1 - \gamma } \right){P_{fa}} + \gamma {P_{D}^{n,k}}} \right) \nonumber \\ 
	& \ \ \ \ \ \ \ \ \ \ \ \ \ \  - \left( {1 - \gamma } \right)H_b\left( {{P_{fa}}} \right) - \gamma H_b\left( {1 - {P_{D}^{n,k}}} \right),
\end{align}
where $H_b\left( p \right) =  - p {\log _2}p - \left( {1 - p} \right){\log _2}\left( {1 - p} \right)$ is a binary entropy function with parameter $p \in [0,1]$. $P_{fa}={\Pr}\left({\hat U}^{n,k}=1|U^{n,k}=0\right)$ and $P_{D}^{n,k}={\Pr}\left({\hat U}^{n,k}=1|U^{n,k}=1\right)$. 
$P_{D}^{n,k}$ may be different in each cell due to different SINR. Moreover, according to Remark 1 in \cite{choi2023information}, the parameter $\gamma$ evaluates the sparsity of targets in the delay-Doppler cells, and it can choose $\gamma = 0.5$, if the sensing system does not have any information about the environment.

Once a target is detected, the sensing system may focus on estimating the target parameters. This paper mainly concentrates on estimating the range (delay) and velocity (Doppler) of the target.
When a target is present in the $(n,k)$-th cell, it can be assumed that both $\tau^{n,k}$ and $f_{d}^{n,k}$ follow uniform distribution in the $(n,k)$-th cell, if the sensing system does not have any additional information about the target\cite[p. 472]{bar1995multitarget}. Hence, the conditional probability density functions (pdfs) of delay and Doppler can be given by $f\left(\tau^{n,k}|U^{n,k}=1\right) = \frac{1}{\Delta _{\tau}}, \tau ^{n,k} \in \left[{\bar \tau }^{n,k}-0.5\Delta _{\tau}, {\bar \tau }^{n,k}+0.5\Delta _{\tau}\right)$ and $f\left(f_{d}^{n,k}|U^{n,k}=1\right) = \frac{1}{\Delta _{f_d}}, f_{d}^{n,k} \in \left[{\bar f }_{d}^{n,k}-0.5\Delta _{f_d}, {\bar f }_{d}^{n,k}+0.5\Delta _{f_d}\right)$. Following \cite[Eq. 13]{choi2023information}, when the target is not present in the $(n,k)$-th cell, we model pdfs of delay and Doppler as $f\left(\tau^{n,k}|U^{n,k}=0\right) =\delta\left(\tau ^{n,k}\right)$ and $f\left(f_{d}^{n,k}|U^{n,k}=0\right) =\delta\left(f _{d}^{n,k}\right)$, respectively. Hence, the pdfs of delay and Doppler are given by

\small
\begin{align} \label{eq:pdf_delay_Doppler}
	& f\left(\tau ^{n,k}\right) =  \left(1-\gamma\right)\delta\left(\tau ^{n,k}\right) \nonumber \\
	& \qquad \quad + \gamma\frac{1}{\Delta _{\tau}}g_{\left({\bar \tau }^{n,k}-0.5\Delta _{\tau}, {\bar \tau }^{n,k}+0.5\Delta _{\tau}\right)}\left(\tau ^{n,k}\right), \\
	& f\left(f_{d}^{n,k}\right) = \left(1-\gamma\right)\delta\left(f_{d}^{n,k}\right) \nonumber \\
	& \qquad \quad + \gamma\frac{1}{\Delta _{f_{d}}}g_{\left({\bar f }_{d}^{n,k}-0.5\Delta _{f_{d}}, {\bar f }_{d}^{n,k}+0.5\Delta _{f_{d}}\right)}\left(f_{d}^{n,k}\right),
\end{align} \normalsize
 where ${g_{\left( {a,b} \right)}}\left( {{x}} \right) = \left\{ {\begin{array}{*{20}{c}}
		{1,}&{{x} \in \left[ {a,b} \right)}\\
		{0,}&{otherwise}
\end{array}} \right.$ is the window function.

For the Gaussian noise case, the observed delay and Doppler in the $(n,k)$-th cell can be modelled as
\begin{eqnarray}\label{eq:output_delay_Doppler}
	\tilde \tau ^{n,k} = \tau^{n,k} + n _{\tau }^{n,k}, \ \tilde f_{d}^{n,k} = f_{d}^{n,k} + n _{f_{d}}^{n,k},
\end{eqnarray}
where $n _{\tau }^{n,k} \sim \mathcal {N}\left( {0\text{,}  \left(\sigma _{\tau}^{n,k} \right)^2} \right) $ and $n _{f_{d}}^{n,k} \sim \mathcal {N}\left( {0\text{,}  \left(\sigma _{f_d}^{n,k}\right) ^2} \right)$ denote the error of delay and Doppler.
The lower bounds on $\sigma _{\tau}^{n,k} $ and $\sigma _{f_d}^{n,k}$, i.e., Cramér-Rao Bound (CRB), are expressed as\cite{mark2010principles}
\begin{equation}\label{eq:sigma_R_2}
	{\sigma _{\tau}^{n,k}} \ge \frac{1}{{{B_{rms}} \sqrt {{{\rm SINR}^{n,k}}} }}, \ {\sigma _{f_d}^{n,k}} \ge \frac{1 }{{{T_{rms}} \sqrt {{{\rm SINR}^{n,k}}} }},
\end{equation}
where $T_{rms}$ and $B_{rms}$ are the root mean square (RMS) time duration and the RMS bandwidth of ISAC waveform\cite[Eq. (18.27)]{mark2010principles}. ${\rm SINR}^{n,k}$ is the output SINR of the $(n,k)$-th cell.

Although the range (delay) and velocity (Doppler) parameters of the target are continuous, in most application scenarios, the estimation error only needs to be less than a certain requirement threshold, e.g., the estimation error of the range need to be less than 0.5 m. 
This means that the interested delay and Doppler of the sensing system can be considered as discrete values as long as the estimation error meets the requirements of the system.

Therefore, the $(n,k)$-th delay-Doppler cell can be divided into multiple sub-cells with the delay interval $D_{\tau }^{n,k}$ and the Doppler interval $D_{f_d }^{n,k}$, analogous to a communication constellation. The target within a delay-Doppler cell must be in one of its sub-cells, as shown in Fig. \ref{fg:radar_efficiency}. Now, the receiver only needs to determine the sub-cell where the target belongs and then uses the corresponding discrete delay/Doppler value of the determined sub-cell as the estimated delay/Doppler of the target, i.e., the delay and Doppler are estimated as	
\begin{align}\label{eq:R_est}
	{{\hat \tau }^{n,k}} &= \left\lfloor {{{\left( {\tilde \tau ^{n,k} - {{\bar \tau }^{n,k}}} \right)} \mathord{\left/
				{\vphantom {{\left( {\tilde \tau ^{n,k} - {{\bar \tau }^{n,k}}} \right)} {{D_{\tau }^{n,k}}}}} \right.
				\kern-\nulldelimiterspace} {{D_{\tau }^{n,k}}}}} \right\rceil {D_{\tau }^{n,k}} + {{\bar \tau }^{n,k}}, \nonumber \\
			{{\hat f}_{d}^{n,k}} &= \left\lfloor {{{\left( {\tilde f_{d}^{n,k} - {{\bar f}_{d}^{n,k}}} \right)} \mathord{\left/
						{\vphantom {{\left( {\tilde f_{d}^{n,k} - {{\bar f}_{d}^{n,k}}} \right)} {{D_{{f_d}}^{n,k}}}}} \right.
						\kern-\nulldelimiterspace} {{D_{{f_d}}^{n,k}}}}} \right\rceil {D_{{f_d}}^{n,k}} + {{\bar f}_{d}^{n,k}}.
\end{align}

\begin{figure}[h] 	
		\vspace*{-5pt} 
	\centerline{\includegraphics[width=2.3 in]{./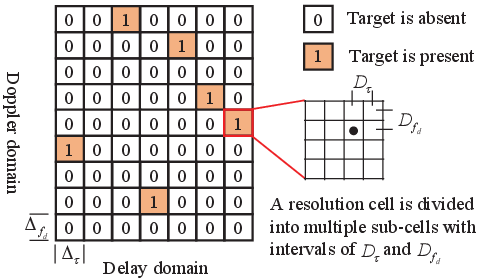}}
	\caption{Delay-Doppler resolution cells and divided sub-cells of sensing system.}
	\label{fg:radar_efficiency}	
\end{figure}

For the given certain requirement thresholds of range and velocity estimation errors, i.e., $\bar D_{R}$ and $\bar D_{V}$, we let $D_{\tau }^{n,k}  = {{2{\bar D_R}} \mathord{\left/
		{\vphantom {{2{\bar D_R}} c}} \right.
		\kern-\nulldelimiterspace} c} = \bar D_{\tau}$ and $D_{f_d }^{n,k}  = {{2{\bar D_V}{f_c}} \mathord{\left/
		{\vphantom {{2{\bar D_V}{f_c}} c}} \right. \kern-\nulldelimiterspace} c}= \bar D_{f_d}$. 
At this time, if the correct sub-cell has been determined, it can ensure that the estimation error of range/velocity is less than the given requirement threshold $\bar D_{R}$/$\bar D_{V}$. Now, the obtained information about the target by estimating delay and Doppler in the $(n,k)$-th cell can be written as following Lemma 1 and Lemma 2.

\begin{lemma}
	If the correct sub-cell has been determined, the obtained information about the target by estimating delay in the $(n,k)$-th cell after detection is given by
	\begin{align}\label{eq:info_est_delay}
		& {I\left(\tau ^{n,k};{\hat \tau }^{n,k}| U^{n,k},{\hat U}^{n,k}\right)} = \gamma P_{D}^{n,k}{\log _2}\left({\frac{{{\Delta _\tau}}}{{\bar D _{\tau } }}} \right).
	\end{align}
\end{lemma}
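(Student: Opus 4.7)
The plan is to expand the conditional mutual information into a weighted sum over the four possible joint outcomes of $(U^{n,k}, \hat U^{n,k})\in\{0,1\}^2$, and then argue that only the event ``target truly present and correctly detected'' contributes a nonzero term. Concretely, I would write
\begin{equation*}
I(\tau^{n,k};\hat\tau^{n,k}\mid U^{n,k},\hat U^{n,k})=\sum_{u,\hat u\in\{0,1\}}\Pr(U^{n,k}{=}u,\hat U^{n,k}{=}\hat u)\,I(\tau^{n,k};\hat\tau^{n,k}\mid U^{n,k}{=}u,\hat U^{n,k}{=}\hat u),
\end{equation*}
and compute the joint probabilities from the Bernoulli parameter $\gamma$ together with $P_{fa}$ and $P_{D}^{n,k}$; in particular $\Pr(U^{n,k}{=}1,\hat U^{n,k}{=}1)=\gamma P_{D}^{n,k}$.

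Next I would dispose of the three trivial cases. For $U^{n,k}{=}0$ (both $\hat U^{n,k}{=}0$ and the false-alarm branch $\hat U^{n,k}{=}1$), the conditional pdf of $\tau^{n,k}$ is the Dirac mass at $0$ stipulated in \eqref{eq:pdf_delay_Doppler}, so $\tau^{n,k}$ is a constant and the conditional entropy, and hence the conditional mutual information, vanishes. For the missed-detection case $U^{n,k}{=}1,\hat U^{n,k}{=}0$ no estimation is performed, so $\hat\tau^{n,k}$ is a fixed (or arbitrary) constant independent of $\tau^{n,k}$, again giving zero mutual information.

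For the remaining case $U^{n,k}{=}1,\hat U^{n,k}{=}1$, conditioning on correct sub-cell determination means the estimator in \eqref{eq:R_est} maps $\tau^{n,k}$ deterministically to the center of the sub-cell containing it. Since $\tau^{n,k}$ is uniform on an interval of length $\Delta_\tau$ (from \eqref{eq:pdf_delay_Doppler}) and the sub-cells partition this interval into $\Delta_\tau/\bar D_\tau$ equal pieces, $\hat\tau^{n,k}$ is uniform over $\Delta_\tau/\bar D_\tau$ discrete centers. Therefore $H(\hat\tau^{n,k}\mid U^{n,k}{=}1,\hat U^{n,k}{=}1)=\log_2(\Delta_\tau/\bar D_\tau)$ and $H(\hat\tau^{n,k}\mid \tau^{n,k},U^{n,k}{=}1,\hat U^{n,k}{=}1)=0$, so $I=\log_2(\Delta_\tau/\bar D_\tau)$. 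Multiplying by the weight $\gamma P_{D}^{n,k}$ yields the claim.

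The main subtlety is the mixed discrete-continuous nature of $\tau^{n,k}$ under the model \eqref{eq:pdf_delay_Doppler}: one must be careful that the false-alarm branch contributes no information because the \emph{true} $\tau^{n,k}$ is degenerate there (the receiver may output an estimate, but that estimate is a function of noise alone and is independent of the deterministic $\tau^{n,k}=0$). Framing the argument in terms of $\hat\tau^{n,k}$ being a deterministic function of $\tau^{n,k}$ in the detection-success branch, and $\hat\tau^{n,k}$ being independent of $\tau^{n,k}$ in the other three branches, sidesteps any measure-theoretic issues with the Dirac components and yields the stated closed form cleanly.
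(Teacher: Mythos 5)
Your proof is correct, but it follows a genuinely different route from the paper's. The paper expands $I\left(\tau^{n,k};\hat\tau^{n,k}\mid U^{n,k},\hat U^{n,k}\right)$ as a difference of conditional (differential) entropies of the continuous delay, $H\left(\tau^{n,k}\mid U^{n,k},\hat U^{n,k}\right)-H\left(\tau^{n,k}\mid \hat\tau^{n,k},U^{n,k},\hat U^{n,k}\right)$, derives the conditional pdfs $f\left(\tau^{n,k}\mid\hat\tau^{n,k},U^{n,k},\hat U^{n,k}\right)$ in all four branches via Bayes' rule, and obtains the result after a formal cancellation of the $\left(1-\gamma\right)\log\delta\left(0\right)$ terms generated by the Dirac component of the delay model. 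You instead decompose over the four $\left(U^{n,k},\hat U^{n,k}\right)$ outcomes, kill three branches by degeneracy or independence, and in the detection-success branch compute the mutual information from the \emph{discrete} entropy of the quantized estimate $\hat\tau^{n,k}$, which is a deterministic function of $\tau^{n,k}$ uniform over $\Delta_\tau/\bar D_\tau$ sub-cell centers; this gives $\log_2\left(\Delta_\tau/\bar D_\tau\right)$ and, with weight $\gamma P_D^{n,k}$, the claim. Your version buys a cleaner treatment of the mixed discrete--continuous model (no infinite $\log\delta\left(0\right)$ bookkeeping), while the paper's version additionally exhibits the explicit posterior pdfs of the delay in every branch, which illuminate the estimation model. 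One point you should make explicit rather than implicit: in the $\left(U^{n,k}{=}1,\hat U^{n,k}{=}1\right)$ branch you take $\tau^{n,k}$ to remain uniform on its cell, i.e., $f\left(\tau^{n,k}\mid U^{n,k},\hat U^{n,k}\right)=f\left(\tau^{n,k}\mid U^{n,k}\right)$, so that detection does not bias the location of the target within the cell; this is exactly the assumption the paper invokes at equalities (a) and (b) of its proof, and without stating it your claim that $\hat\tau^{n,k}$ is uniform over the $\Delta_\tau/\bar D_\tau$ centers does not follow.
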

\begin{proof}
	See Appendix \ref{proof_Lemma1}.
\end{proof}

\begin{lemma} 
	If the correct sub-cell has been determined, the obtained information about the target by estimating Doppler in the $(n,k)$-th cell after detection and delay estimation can be written as
\small
\begin{align}\label{eq:info_est_Doppler}
	& {I\left(f_{d}^{n,k};{\hat f }_{d}^{n,k}| U^{n,k},{\hat U}^{n,k},\tau ^{n,k},{\hat \tau }^{n,k}\right)} = \gamma P_{D}^{n,k}{\log _2}\left({\frac{{{\Delta _{f_d}}}}{{\bar D _{{f_d} } }}} \right).
\end{align}\normalsize
\end{lemma}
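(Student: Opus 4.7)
The plan is to mirror the proof strategy of Lemma 1, with the extra work being the justification that the additional conditioning on the delay variables $\tau^{n,k}$ and $\hat\tau^{n,k}$ does not change the value of the conditional mutual information. First, I would expand $I(f_d^{n,k};\hat f_d^{n,k} \mid U^{n,k},\hat U^{n,k},\tau^{n,k},\hat\tau^{n,k})$ by marginalizing over the joint distribution of the conditioning variables, and then show that the expression collapses to $I(f_d^{n,k};\hat f_d^{n,k} \mid U^{n,k},\hat U^{n,k})$ so that the computation of Lemma 1 can be re-used verbatim with the substitutions $\tau\to f_d$, $\bar D_\tau\to \bar D_{f_d}$, $\Delta_\tau\to\Delta_{f_d}$.

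The key structural step is the independence argument. From (\ref{eq:pdf_delay_Doppler}), the joint prior on $(\tau^{n,k},f_d^{n,k})$ given $U^{n,k}=1$ is the product of two independent uniform distributions on their respective cell coordinates; likewise, given $U^{n,k}=0$, both are deterministic. The observation model (\ref{eq:output_delay_Doppler}) further introduces independent Gaussian noises $n_\tau^{n,k}$ and $n_{f_d}^{n,k}$, so $(\tau^{n,k},\hat\tau^{n,k})$ and $(f_d^{n,k},\hat f_d^{n,k})$ are conditionally independent given $(U^{n,k},\hat U^{n,k})$. Plugging this into the definition of conditional mutual information eliminates the $\tau^{n,k},\hat\tau^{n,k}$ arguments.

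Once that reduction is done, the remaining computation is a case analysis over $(U^{n,k},\hat U^{n,k})$ exactly as in Lemma 1. When $U^{n,k}=0$ the Doppler is a deterministic delta, giving zero contribution; when $U^{n,k}=1,\hat U^{n,k}=0$ (missed detection, probability $\gamma(1-P_D^{n,k})$) no estimate is formed and the contribution is again zero; when $U^{n,k}=1,\hat U^{n,k}=1$ (probability $\gamma P_D^{n,k}$), the hypothesis that the correct sub-cell is determined implies $H(\hat f_d^{n,k}\mid f_d^{n,k})=0$, while the uniform prior over the $\Delta_{f_d}/\bar D_{f_d}$ sub-cells gives $H(\hat f_d^{n,k})=\log_2(\Delta_{f_d}/\bar D_{f_d})$. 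Combining the cases yields the claimed identity. The main obstacle I anticipate is not any calculation but the bookkeeping needed to state the conditional independence cleanly and to justify that $\hat f_d^{n,k}$ is well-defined (and carries no information) in the missed-detection and target-absent branches; once that is handled, the rest is a transcription of the Lemma 1 derivation.
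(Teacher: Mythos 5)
Your proposal is correct and follows essentially the same route as the paper: the paper's proof of Lemma 2 is exactly the reduction $I\left(f_{d}^{n,k};\hat f_{d}^{n,k}\,\middle|\,U^{n,k},\hat U^{n,k},\tau^{n,k},\hat\tau^{n,k}\right)=I\left(f_{d}^{n,k};\hat f_{d}^{n,k}\,\middle|\,U^{n,k},\hat U^{n,k}\right)$ justified by the independence of delay and Doppler, followed by repeating the Lemma 1 derivation with $\tau^{n,k}$ replaced by $f_{d}^{n,k}$ (and $\Delta_\tau,\bar D_\tau$ by $\Delta_{f_d},\bar D_{f_d}$). Your explicit conditional-independence argument merely spells out what the paper asserts in one line, and your per-case entropy computation (via $H(\hat f_d^{n,k})-H(\hat f_d^{n,k}\mid f_d^{n,k})$ rather than the paper's differential entropies of $f_d^{n,k}$) is an equivalent, slightly cleaner transcription that yields the same result.
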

\begin{proof}
Since the delay and the Doppler of the target are independent of each other, ${I\left(f_{d}^{n,k};{\hat f }_{d}^{n,k}| U^{n,k},{\hat U}^{n,k},\tau ^{n,k},{\hat \tau }^{n,k}\right)} = {I\left(f_{d}^{n,k};{\hat f }_{d}^{n,k}| U^{n,k},{\hat U}^{n,k}\right)}$. Then, the proof is similar to Lemma 1 by replacing ${\tau ^{n,k}}$ with $f_{d}^{n,k}$.
\end{proof}

{ Lemma 1 and Lemma 2 reveal that the obtained information on the target by estimating delay/Doppler is proportional to the number of divided sub-cells of delay/Doppler, i.e., ${\frac{{{\Delta _\tau}}}{{\bar D _{\tau } }}}$ or ${\frac{{{\Delta _{f_d}}}}{{\bar D _{{f_d} } }}}$. Specifically, the smaller the interval $\bar D_{\tau}$ or $\bar D_{f_d}$ of divided sub-cell, the more information can be obtained by estimation. Moreover, the obtained information is also affected by the probabilities of the target presenting and being correctly detected, i.e., $\gamma$ and $P_{D}^{n,k}$. It is because only when the target is present and correctly detected, the estimation process can obtain the information on the target.}

\begin{figure*}
\vspace*{-10pt} 
	\small
	\begin{flalign}\label{eq:info_sensing}
		&\		I\left( {{U^{n,k}},{\tau ^{n,k}},{f_{d}^{n,k}};{{\hat U}^{n,k}},{{\hat \tau }^{n,k}},{{\hat f}_{d}^{n,k}}} \right) 	= I\left( {{U^{n,k}};{{\hat U}^{n,k}}} \right) + I\left( {{\tau ^{n,k}};{{\hat U}^{n,k}}|{U^{n,k}}} \right) + I\left( {{f_{d}^{n,k}};{{\hat U}^{n,k}}|{U^{n,k}},{\tau ^{n,k}}} \right) & \nonumber\\
		&\	\quad\quad\quad\quad\quad\quad\quad\quad\quad\quad\quad\quad	+ I\left( {{U^{n,k}};{{\hat \tau }^{n,k}}|{{\hat U}^{n,k}}} \right) + I\left( {{\tau ^{n,k}};{{\hat \tau }^{n,k}}|{U^{n,k}},{{\hat U}^{n,k}}} \right) + I\left( {{f_{d}^{n,k}};{{\hat \tau }^{n,k}}|{U^{n,k}},{\tau ^{n,k}},{{\hat U}^{n,k}}} \right) & \nonumber\\
		&\	\quad\quad\quad\quad\quad\quad\quad\quad\quad\quad\quad\quad	+ I\left( {{U^{n,k}};{{\hat f}_{d}^{n,k}}|{{\hat U}^{n,k}},{{\hat \tau }^{n,k}}} \right) + I\left( {{\tau ^{n,k}};{{\hat f}_{d}^{n,k}}|{U^{n,k}},{{\hat U}^{n,k}},{{\hat \tau }^{n,k}}} \right) + I\left( {{f_{d}^{n,k}};{{\hat f}_{d}^{n,k}}|{U^{n,k}},{\tau ^{n,k}},{{\hat U}^{n,k}},{{\hat \tau }^{n,k}}} \right) &  \nonumber \\
		&\ \stackrel{a}=  I\left( {{U^{n,k}};{{\hat U}^{n,k}}} \right) + I\left( {{\tau ^{n,k}};{{\hat \tau }^{n,k}}|{U^{n,k}},{{\hat U}^{n,k}}} \right) + I\left( {{f_{d}^{n,k}};{{\hat f}_{d}^{n,k}}|{U^{n,k}},{\tau ^{n,k}},{{\hat U}^{n,k}},{{\hat \tau }^{n,k}}} \right)  & \nonumber \\
		&\ = {H_b}\left( {\left( {1 - \gamma } \right){P_{fa}} + \gamma {P_{D}^{n,k}}} \right){\rm{ }} - \left( {1 - \gamma } \right){H_b}\left( {{P_{fa}}} \right) - \gamma {H_b}\left( {1 - {P_{D}^{n,k}}} \right) + \gamma {P_{D}^{n,k}}{\log _2} \left( {\frac{{{\Delta _\tau }{\Delta _{{f_d}}}}}{{{\bar D_\tau }{\bar D_{{f_d}}}}}} \right). & 
	\end{flalign}\normalsize
\vspace*{-10pt} 
\end{figure*}
Next, using the chain rule, the obtained information about target by sensing (including detection and estimation) in the $\left(n,k\right)$-th cell, i.e., $I\left( {{U^{n,k}},{\tau ^{n,k}},{f_{d}^{n,k}};{{\hat U}^{n,k}},{{\hat \tau }^{n,k}},{{\hat f}_{d}^{n,k}}} \right)$, is decomposed as (\ref{eq:info_sensing}) on the next page, where the equality of (a) comes from $I\left( {{\tau ^{n,k}};{{\hat U}^{n,k}}|{U^{n,k}}} \right) = 0$, $I\left( {{f_{d}^{n,k}};{{\hat U}^{n,k}}|{U^{n,k}},{\tau ^{n,k}}} \right)=0$, $I\left( {{U^{n,k}};{{\hat \tau }^{n,k}}|{{\hat U}^{n,k}}} \right)=0$, $I\left( {{f_{d}^{n,k}};{{\hat \tau }^{n,k}}|{U^{n,k}},{\tau ^{n,k}},{{\hat U}^{n,k}}} \right)=0$, $I\left( {{U^{n,k}};{{\hat f}_{d}^{n,k}}|{{\hat U}^{n,k}},{{\hat \tau }^{n,k}}} \right)=0$, and $I\left( {{\tau ^{n,k}};{{\hat f}_{d}^{n,k}}|{U^{n,k}},{{\hat U}^{n,k}},{{\hat \tau }^{n,k}}} \right)=0$.

\begin{definition}
	SSE is defined as how much interested information on targets can be actually obtained by sensing (including detection and estimation) for a given waveform with the time interval $T_A$ and bandwidth $B$, which is given by 
	\small
		\begin{equation}
			{\eta _{sen}} = \frac{1}{{B{T_A}}}\sum\limits_{n = 0}^{{N_\tau } - 1} {\sum\limits_{k = 0}^{{N_{{f_d}}} - 1} {I\left( {{U^{n,k}},{\tau ^{n,k}},{f_{d}^{n,k}};{{\hat U}^{n,k}},{{\hat \tau }^{n,k}},{{\hat f}_{d}^{n,k}}} \right)}}.
		\end{equation}
	\normalsize
\end{definition}

According to (\ref{eq:info_sensing}), SSE can be rewritten as (\ref{eq:eta_sen}).  
\begin{figure*}[htbp]
			\vspace*{-10pt} 
	\small
	\begin{flalign}\label{eq:eta_sen}
&\	{\eta _{sen}}  = \frac{1}{{B{T_A}}}\sum\limits_{n = 0}^{{N_{\tau} } - 1} {\sum\limits_{k = 0}^{{N_{{f_d}}} - 1} {{{H_b}\left( {\left( {1 - \gamma } \right){P_{fa}} + \gamma {P_{D}^{n,k}}} \right){\rm{ }} - \left( {1 - \gamma } \right){H_b}\left( {{P_{fa}}} \right) - \gamma {H_b}\left( {1 - {P_{D}^{n,k}}} \right) + \gamma {P_{D}^{n,k}}{\log _2} \left( {\frac{{{\Delta _\tau }{\Delta _{{f_d}}}}}{{{\bar D_\tau }{\bar D_{{f_d}}}}}} \right)}}}.&
	\end{flalign}\normalsize
	\vspace*{-10pt} 
\end{figure*} 
When the probability of detection ${P_{D}^{n,k}}$ of each present target is equal to the pre-set threshold (lower bound) of sensing system, i.e., $P_D^{\rm Th}$, a lower bound on SSE can be obtained shown as (\ref{eq:eta_sen_2}). 
\begin{figure*}[htbp]
			\vspace*{-10pt} 
	\small
		\begin{flalign}\label{eq:eta_sen_2}		
			&\			{\eta _{sen}} = \frac{1}{{B{T_A}}}\frac{{\tau _{\max }}}{{{\Delta _ \tau}}}\frac{{2{f_{d,\max }}}}{{{\Delta _{f_d}}}}\left[{H\left( {\left( {1 - \gamma } \right){P_{fa}} + \gamma {P_D^{\rm Th}}} \right) - \left( {1 - \gamma } \right)H\left( {{P_{fa}}} \right) - \gamma H\left( {1 - {P_D^{\rm Th}}} \right) + \gamma{P_D^{\rm Th}}{\log _2}\left({\frac{{{\Delta _\tau}{\Delta _{f_d} }}}{{{\bar D _{\tau}}{\bar D _{f_d}}}}} \right)}\right].&
	\end{flalign}	\normalsize
	\vspace*{-10pt} 
\end{figure*}

\begin{example} \label{ex:SSE}
	When $\lambda = 0.5$, $P_{fa} = 10^{-6}$ and $P_D^{\rm Th} = 0.999$, the lower bound on SSE in (\ref{eq:eta_sen_2}) can be rewritten as
		\begin{align}\label{eq:eta_sen_3}		
			{\eta _{sen}} &= \frac{1}{{B{T_A}}}\frac{{\tau _{\max }}}{{{\Delta _ \tau}}}\frac{{2{f_{d,\max }}}}{{{\Delta _{f_d}}}}\left[0.9943 + 0.4995{\log _2}\left({\frac{{{\Delta _\tau}{\Delta _{f_d} }}}{{{\bar D _{\tau}}{\bar D _{f_d}}}}} \right)\right].
		\end{align}	
\end{example}

Next, we analyze the similarity and difference between our proposed SSE and the existing metrics in terms of the radar capacity\cite{guerci2015joint}, the radar estimation rate\cite{chiriyath2015inner} and the theoretical sensing rate\cite{choi2023information}.

{\bf{Similarity}}: They measure how much uncertainty can be eliminated by the once-sensing process from the information theory perspective.

{\bf{Difference}}: \romannumeral 1) The radar capacity is determined by the maximum number of resolvable targets, which does not take into account the additional target information that can be obtained within each resolution unit. \romannumeral 2) The radar estimation rate measures how much uncertainty is cancelled by the parameter estimation process, which is affected by not only the SINR, but also the predicted tracking error. It means that enlarging the predicted tracking error can improve the radar estimation rate. \romannumeral 3) The theoretical sensing rate measures the fundamental limit on the information obtained by a pulse-Doppler radar system, which does not consider the actual ISAC waveform. \romannumeral 4) Our proposed SSE evaluates an achievable resource utilization efficiency of the sensing for the given ISAC waveform, considering both detection and estimation processes.

\vspace{1ex}
\noindent\emph{(2) Proposed SOP}
\vspace{1ex}

{ We next define SOP to measure the reliability of sensing functionality.}
Due to the presence of noise, an incorrect sub-cell may be determined, which results in the estimation error of range/velocity larger than the given requirement threshold. Similar to the BER measuring reliability of communications, we define SOP to evaluate the reliability of sensing results.

\begin{definition}
	We define that a sensing outage event occurs when the sensing result cannot meet the requirement of system, including false alarm, missed alarm, and the estimated error of delay/Doppler larger than the given requirement threshold. Consequently, SOP is defined as the number of sensing outage divided by the total number of sensing, which is expressed as (\ref{eq:sen_error}).
	\begin{figure*}[htbp]
				\vspace*{-10pt} 
		\small
			\begin{flalign}\label{eq:sen_error}		
				&\		{P_{e,sen}} = \frac{1}{{{N_\tau }{N_{{f_d}}}}}\sum\limits_{n = 0}^{{N_\tau } - 1} {\sum\limits_{k = 0}^{{N_{{f_d}}} - 1} {\left[ {\left( {1 - \gamma } \right){P_{fa}} + \gamma \left( {1 - {P_{D}^{n,k}}} \right) + \gamma {P_{D}^{n,k}} \Pr \left( {\left| {\tau ^{n,k} - {{\hat \tau }^{n,k}}} \right| \ge {\bar D_{\tau}}\parallel \left| {{f}_{d}^{n,k} - {{\hat f}_{d}^{n,k}}} \right| \ge {\bar D_{f_{d}}}} \right)} \right]} }. &		
		\end{flalign}	\normalsize
	\hrule
			\vspace*{-10pt} 
	\end{figure*} 

\end{definition}

In (\ref{eq:sen_error}), $\Pr \left( {\left| {\tau ^{n,k} - {{\hat \tau }^{n,k}}} \right| \ge {\bar D_{\tau}}\parallel \left| {{f}_{d}^{n,k} - {{\hat f}_{d}^{n,k}}} \right| \ge {\bar D_{f_{d}}}} \right)$ denotes the probability of the estimated error of delay/Doppler greater than the given requirement thresholds when ${U^{n,k}}=1$ and ${{\hat U}^{n,k}}=1$, which can be rewritten as
\small
\begin{align}\label{eq:est_error}
	&\Pr \left( {\left| {\tau ^{n,k} - {{\hat \tau }^{n,k}}} \right| \ge {\bar D_{\tau}}\parallel \left| {{f}_{d}^{n,k} - {{\hat f}_{d}^{n,k}}} \right| \ge {\bar D_{f_{d}}}} \right) \nonumber \\
	&= 1 - \Pr \left( {\tau ^{L} < {{\tilde \tau }^{n,k}} < \tau ^{U}} \right)\Pr \left( f_{d}^L < {{\tilde f}_{d}^{n,k}} < f_{d}^U \right),
\end{align}	\normalsize
where $\tau ^{L} = \left\lfloor {{{\left( {{\tau ^{n,k}} - {{\bar \tau }^{n,k}}} \right)} \mathord{\left/
				{\vphantom {{\left( {{\tau ^{n,k}} - {{\bar \tau }^{n,k}}} \right)} {{\bar D_\tau }}}} \right.
				\kern-\nulldelimiterspace} {{\bar D_\tau }}}} \right\rfloor {\bar D_\tau } + {{\bar \tau }^{n,k}} - 0.5{\bar D_\tau }$, $\tau ^{U} = \left\lceil {{{\left( {{\tau ^{n,k}} - {{\bar \tau }^{n,k}}} \right)} \mathord{\left/
				{\vphantom {{\left( {{\tau ^{n,k}} - {{\bar \tau }^{n,k}}} \right)} {{\bar D_\tau }}}} \right.
				\kern-\nulldelimiterspace} {{\bar D_\tau }}}} \right\rceil {\bar D_\tau } + {{\bar \tau }^{n,k}} + 0.5{\bar D_\tau }$, $f_{d}^{L} = \left\lfloor {{{\left( {{f_{d}^{n,k}} - {{\bar f}_{d}^{n,k}}} \right)} \mathord{\left/
				{\vphantom {{\left( {{f_{d}^{n,k}} - {{\bar f}_{d}^{n,k}}} \right)} {{\bar D_{{f_d}}}}}} \right.
				\kern-\nulldelimiterspace} {{\bar D_{{f_d}}}}}} \right\rfloor {\bar D_{{f_d}}} + {{\bar f}_{d}^{n,k}} - 0.5{\bar D_{{f_d}}}$, and $f_{d}^{U} = \left\lceil {{{\left( {{f_{d}^{n,k}} - {{\bar f}_{d}^{n,k}}} \right)} \mathord{\left/
				{\vphantom {{\left( {{f_{d}^{n,k}} - {{\bar f}_{d}^{n,k}}} \right)} {{\bar D_{{f_d}}}}}} \right.
				\kern-\nulldelimiterspace} {{\bar D_{{f_d}}}}}} \right\rceil {\bar D_{{f_d}}} + {{\bar f}_{d}^{n,k}} + 0.5{\bar D_{{f_d}}}$. 
According to (\ref{eq:pdf_delay_Doppler}) and (\ref{eq:output_delay_Doppler}), when $U^{n,k}=1$, the pdfs of ${\tilde \tau }^{n,k}$ and ${\tilde f}_{d}^{n,k}$ can be written as
 \small
\begin{align}	
&{f}\left( {{{\tilde \tau }^{n,k}}|U^{n,k}=1} \right) = \int_{ - \infty }^\infty  \hspace{-5px}{{f}\left( {{\tau ^{n,k}}}|U^{n,k}=1 \right){f}\left( {{{\tilde \tau }^{n,k}} - {\tau ^{n,k}}} \right)d{\tau ^{n,k}}}  \nonumber \\
&= \int_{{{\bar \tau }^{n,k}} - 0.5{\Delta _\tau }}^{{{\bar \tau }^{n,k}} + 0.5{\Delta _\tau }} {\frac{1}{{{\Delta _\tau }}}\frac{1}{{\sqrt {2\pi } {\sigma _{\tau }^{n,k}}}}{e^{ - \left(\frac{ {{{\tilde \tau }^{n,k}} - {\tau }} }{{\sigma _{\tau }^{n,k}}}  \right)^2}}d{\tau }},
\end{align} \normalsize
and
 \small
	\begin{align}	
{f}\left( {{{\tilde f}_{d}^{n,k}}}|U^{n,k}=1 \right) = \int_{{{\bar f}_{d}^{n,k}} - 0.5{\Delta _{{f_d}}}}^{{{\bar f}_{d}^{n,k}} + 0.5{\Delta _{{f_d}}}} {\frac{1}{{{\Delta _{{f_d}}}}}\frac{1}{{\sqrt {2\pi } {\sigma _{{f_d}}^{n,k}}}}{e^{ - \left( \frac{{{ {{{\tilde f}_{d}^{n,k}} - {f_{d}}} }}}{{\sigma _{{f_d}}^{n,k}}}  \right)^2 } }d{f_{d}}} .
\end{align} \normalsize

As a result, SOP in (\ref{eq:sen_error}) can be written as (\ref{eq:sen_error_2}), where $P_{\det}=\left( {1 - \gamma } \right){P_{fa}} + \gamma \left( {1 - {P_{D}^{n,k}}} \right)$, and $Q\left( x \right) = \int_x^\infty  {\frac{1}{{\sqrt {2\pi } }}{e^{ - \frac{{{t^2}}}{2}}}dt}$ denoting Q-function.
\begin{figure*}[htbp]
			\vspace*{-10pt} 
	\small
		\begin{flalign}\label{eq:sen_error_2}		
&\				{P_{e,sen}} = \frac{1}{{{N_\tau }{N_{{f_d}}}}}\sum\limits_{n = 0}^{{N_\tau } - 1} \sum\limits_{k = 0}^{{N_{{f_d}}} - 1} \left\{ P_{\det} + \gamma {P_{D}^{n,k}} \left[ 1 - \frac{1}{{{\Delta _\tau }{\Delta _{{f_d}}}}}\int_{{{\bar \tau }^{n,k}} - 0.5{\Delta _\tau }}^{{{\bar \tau }^{n,k}} + 0.5{\Delta _\tau }} {\left( {Q\left( {\frac{{\tau ^{L} - {\tau ^{n,k}}}}{{{\sigma _{\tau }^{n,k}}}}} \right) - Q\left( {\frac{{\tau ^{U} - {\tau ^{n,k}}}}{{{\sigma _{\tau }^{n,k}}}}} \right)} \right)d{\tau ^{n,k}}} \right.\ \right.\ & \nonumber \\
&\			\quad\quad\quad\quad\quad\quad\quad\quad\quad\quad\quad\quad\quad\quad\quad\quad\quad\quad\quad\quad\quad\quad\quad	  \left.\ \left.\ \times \int_{{{\bar f}_{d}^{n,k}} - 0.5{\Delta _{{f_d}}}}^{{{\bar f}_{d}^{n,k}} + 0.5{\Delta _{{f_d}}}} {\left( {Q\left( {\frac{{f_{d}^D - {f_{d}^{n,k}}}}{{{\sigma _{{f_d}}^{n,k}}}}} \right) - Q\left( {\frac{{f_{d}^U - {f_{d}^{n,k}}}}{{{\sigma _{{f_d}}^{n,k}}}}} \right)} \right)d{f_{d}^{n,k}}}  \right] \right\}  .&
		\end{flalign} \normalsize	
	\vspace*{-10pt} 
\end{figure*} 

\begin{corollary}
	SOP, i.e., ${P_{e,sen}}$, is lower bounded by ${P_{e,sen}^{\min}}$ and upper bounded by ${P_{e,sen}^{\max}}$, i.e.,
	\begin{eqnarray}
		P_{e,sen}^{\min }\mathop  \le \limits^{\left( a \right)} {P_{e,sen}}\mathop  \le \limits^{\left( b \right)} P_{e,sen}^{\max },
	\end{eqnarray}
	where $P_{e,sen}^{\min}$ and $P_{e,sen}^{\max}$ are shown in (\ref{eq:SOP_min}) and (\ref{eq:SOP_max}), respectively. The equality of (a) holds if and only if ${\tau ^{n,k}} = \left\lceil {{{\left( {{\tau ^{n,k}} - {{\bar \tau }^{n,k}}} \right)} \mathord{\left/
				{\vphantom {{\left( {{\tau ^{n,k}} - {{\bar \tau }^{n,k}}} \right)} {{\bar D_\tau }}}} \right.
				\kern-\nulldelimiterspace} {{\bar D_\tau }}}} \right\rceil {\bar D_\tau } + {{\bar \tau }^{n,k}} - 0.5{\bar D_\tau }$ and ${f_{d}^{n,k}} = \left\lceil {{{\left( {{f_{d}^{n,k}} - {{\bar f}_{d}^{n,k}}} \right)} \mathord{\left/
				{\vphantom {{\left( {{f_{d}^{n,k}} - {{\bar f}_{d}^{n,k}}} \right)} {{\bar D_{{f_d}}}}}} \right.
				\kern-\nulldelimiterspace} {{\bar D_{{f_d}}}}}} \right\rceil {\bar D_{{f_d}}} + {{\bar f}_{d}^{n,k}} - 0.5{\bar D_{{f_d}}}$. The equality of (b) holds if and only if ${\tau ^{n,k}} = \left\lfloor {{{\left( {{\tau ^{n,k}} - {{\bar \tau }^{n,k}}} \right)} \mathord{\left/
				{\vphantom {{\left( {{\tau ^{n,k}} - {{\bar \tau }^{n,k}}} \right)} {{D_\tau }}}} \right.
				\kern-\nulldelimiterspace} {{\bar D_\tau }}}} \right\rfloor {\bar D_\tau } + {{\bar \tau }^{n,k}}$ and ${f_{d}^{n,k}} = \left\lfloor {{{\left( {{f_{d}^{n,k}} - {{\bar f}_{d}^{n,k}}} \right)} \mathord{\left/
				{\vphantom {{\left( {{f_{d}^{n,k}} - {{\bar f}_{d}^{n,k}}} \right)} {{\bar D_{{f_d}}}}}} \right.
				\kern-\nulldelimiterspace} {{\bar D_{{f_d}}}}}} \right\rfloor {\bar D_{{f_d}}} + {{\bar f}_{d}^{n,k}}$.
\end{corollary}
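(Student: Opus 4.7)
The plan is to prove Corollary~1 by extremizing, as a function of the true delay/Doppler offset inside its sub-cell, the bracketed correct-sub-cell probability that is the only parameter-dependent factor in (\ref{eq:sen_error_2}). Because the per-cell outage contribution $\gamma P_{D}^{n,k}(1-\cdot)$ is an affine, monotonically decreasing function of this product of Q-function differences, the maximum (resp.\ minimum) outage in each cell is attained at the minimum (resp.\ maximum) of the correct-estimation probability.

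First I would reparameterize. With $m_\tau=\lfloor(\tau^{n,k}-\bar\tau^{n,k})/\bar D_\tau\rfloor$ and offset $s_\tau=(\tau^{n,k}-\bar\tau^{n,k})-m_\tau\bar D_\tau\in[0,\bar D_\tau]$, the definitions of $\tau^{L}$ and $\tau^{U}$ give $\tau^{L}-\tau^{n,k}=-(s_\tau+0.5\bar D_\tau)$ and $\tau^{U}-\tau^{n,k}=1.5\bar D_\tau-s_\tau$. Applying $Q(-x)=1-Q(x)$ reduces the delay correct-estimation factor to
\begin{equation}
P_{c,\tau}(s_\tau)=1-Q\!\left(\tfrac{s_\tau+0.5\bar D_\tau}{\sigma_\tau^{n,k}}\right)-Q\!\left(\tfrac{1.5\bar D_\tau-s_\tau}{\sigma_\tau^{n,k}}\right),
\end{equation}
and the Doppler factor admits an identical form in $s_{f_d}$.

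The extrema of $P_{c,\tau}$ on $[0,\bar D_\tau]$ then follow from strict convexity of $Q(\cdot)$ on $(0,\infty)$. The two $Q$-arguments sum to the constant $2\bar D_\tau/\sigma_\tau^{n,k}$ and both lie in $(0,\infty)$, so Jensen's inequality gives $Q(u)+Q(v)\ge 2Q(\bar D_\tau/\sigma_\tau^{n,k})$ with equality iff $u=v$, i.e., iff $s_\tau=0.5\bar D_\tau$; this is the unique maximum of $P_{c,\tau}$, equal to $1-2Q(\bar D_\tau/\sigma_\tau^{n,k})$. Differentiating in $s_\tau$ shows $s_\tau=0.5\bar D_\tau$ is the only interior stationary point, so the minimum is attained at the sub-cell boundaries $s_\tau\in\{0,\bar D_\tau\}$, where $P_{c,\tau}$ equals $1-Q(0.5\bar D_\tau/\sigma_\tau^{n,k})-Q(1.5\bar D_\tau/\sigma_\tau^{n,k})$. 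The same argument yields the center-maximum/boundary-minimum structure for $P_{c,f_d}(s_{f_d})$.

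Since $P_{c,\tau},P_{c,f_d}\in[0,1]$, the product is jointly maximized at the pair of sub-cell centers and jointly minimized at the pair of sub-cell boundaries. Substituting these two extremal products into the bracket of (\ref{eq:sen_error_2}) and summing over $(n,k)$ with the prefactor $1/(N_\tau N_{f_d})$ produces the stated $P_{e,sen}^{\min}$ and $P_{e,sen}^{\max}$, with the equality conditions corresponding exactly to the center offsets $s_\tau=0.5\bar D_\tau,\,s_{f_d}=0.5\bar D_{f_d}$ (condition~(a)) and boundary offsets $s_\tau=0,\,s_{f_d}=0$ (condition~(b)). I expect the main subtlety to be the transition from these pointwise, sub-cell-wise extrema to the $\Delta_\tau$-wide uniform average in (\ref{eq:sen_error_2}): the key observation is that $P_{c,\tau}$ is $\bar D_\tau$-periodic in $\tau^{n,k}$, so the per-sub-cell bounds carry over to the integral over the full resolution cell, while the Jensen and monotonicity steps are otherwise routine.
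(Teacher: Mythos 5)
Your lower-bound (condition (a)) argument is sound and essentially matches the paper: the two $Q$-arguments sum to a constant, so convexity of $Q$ on $(0,\infty)$ gives the maximum correct-estimation probability $1-2Q\!\left(\bar D_\tau/\sigma_\tau^{n,k}\right)$ at the half-offset $s_\tau=0.5\bar D_\tau$, which yields $P_{e,sen}^{\min}$ in (\ref{eq:SOP_min}); the paper states the same extremum with the case split doing the work that your Jensen step makes explicit.

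The upper-bound (condition (b)) half has a genuine gap. Your identities $\tau^{L}-\tau^{n,k}=-(s_\tau+0.5\bar D_\tau)$ and $\tau^{U}-\tau^{n,k}=1.5\bar D_\tau-s_\tau$ hold only for $s_\tau\in(0,\bar D_\tau)$, where $\lceil\cdot\rceil=\lfloor\cdot\rfloor+1$ and the acceptance window has width $2\bar D_\tau$. At $s_\tau=0$ exactly (the case singled out in condition (b)), the floor and ceiling coincide, the window collapses to $\left[\tau^{n,k}-0.5\bar D_\tau,\ \tau^{n,k}+0.5\bar D_\tau\right)$, and the correct-estimation probability is $1-2Q\!\left(\bar D_\tau/(2\sigma_\tau^{n,k})\right)$ — strictly smaller than the value $1-Q\!\left(0.5\bar D_\tau/\sigma_\tau^{n,k}\right)-Q\!\left(1.5\bar D_\tau/\sigma_\tau^{n,k}\right)$ that your continuous extension assigns to $s_\tau=0$, since $Q\!\left(1.5\bar D_\tau/\sigma_\tau^{n,k}\right)<Q\!\left(0.5\bar D_\tau/\sigma_\tau^{n,k}\right)$. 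Consequently your claimed minimum of $P_{c,\tau}$ is both unattained (it is only an infimum over the open interval) and too large, so the outage bound you would obtain is smaller than, and does not reproduce, $P_{e,sen}^{\max}$ in (\ref{eq:SOP_max}), whose $Q\!\left(\bar D_\tau/(2\sigma_\tau^{n,k})\right)$ terms come precisely from the collapsed window. The paper's proof avoids this by explicitly splitting on whether $\tau^{n,k}$ lies on the sub-cell grid (where $x_1-x_2=\bar D_\tau/\sigma_\tau^{n,k}$) or strictly between grid points (where $x_1-x_2=2\bar D_\tau/\sigma_\tau^{n,k}$). A related symptom of the same confusion is your labeling: with the rounding estimator in (\ref{eq:R_est}), $s_\tau=0$ is a sub-cell \emph{center} (grid point) and $s_\tau=0.5\bar D_\tau$ a sub-cell \emph{boundary}, not the other way around; the algebraic conditions you state happen to match (a) and (b), but the value you attach to $s_\tau=0$ is the wrong one, so the upper bound as you derive it fails.
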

\begin{proof}
	See Appendix \ref{proof_Theorem1}.
\end{proof}
\begin{figure*}[htbp]
	\vspace*{-10pt} 
	\small
\begin{flalign}\label{eq:SOP_min}
&\	P_{e,sen}^{\min } = \frac{1}{{{N_\tau }{N_{{f_d}}}}}\sum\limits_{n = 0}^{{N_\tau } - 1} {\sum\limits_{k = 0}^{{N_{{f_d}}} - 1} {\left\{ {P_{\det} + \gamma P_{D}^{n,k}\left[ {2Q\left( {\frac{{{\bar D_\tau }}}{{{\sigma _{\tau }^{n,k}}}}} \right) + 2Q\left( {\frac{{{\bar D_{{f_d}}}}}{{{\sigma _{{f_d}}^{n,k}}}}} \right) - 4Q\left( {\frac{{{\bar D_\tau }}}{{{\sigma _{\tau }^{n,k}}}}} \right)Q\left( {\frac{{{\bar D_{{f_d}}}}}{{{\sigma _{{f_d}}^{n,k}}}}} \right)} \right]} \right\}} }.&
\end{flalign}
\vspace*{-10pt} 
\begin{flalign}\label{eq:SOP_max}
&\		P_{e,sen}^{\max } = \frac{1}{{{N_\tau }{N_{{f_d}}}}}\sum\limits_{n = 0}^{{N_\tau } - 1} {\sum\limits_{k = 0}^{{N_{{f_d}}} - 1} {\left\{ {P_{det} + \gamma {P_{D}^{n,k}} \left[ {2Q\left( {\frac{{{\bar D_\tau }}}{{2{\sigma _{\tau }^{n,k}}}}} \right) + 2Q\left( {\frac{{{\bar D_{{f_d}}}}}{{2{\sigma _{{f_d}}^{n,k}}}}} \right) - 4Q\left( {\frac{{{\bar D_\tau }}}{{2{\sigma _{\tau }^{n,k}}}}} \right)Q\left( {\frac{{{\bar D_{{f_d}}}}}{{2{\sigma _{{f_d}}^{n,k}}}}} \right)} \right]} \right\}} }. &
\end{flalign}\normalsize
\hrule
	\vspace*{-10pt} 
\end{figure*} 

{ Corollary 1 reveals that increasing the values of ${{{\bar D_\tau }} \mathord{\left/
			{\vphantom {{{\bar D_\tau }} {{\sigma _{\tau }^{n,k}}}}} \right.
			\kern-\nulldelimiterspace} {{\sigma _{\tau }^{n,k}}}}$ and ${{{\bar D_{{f_d}}}} \mathord{\left/
			{\vphantom {{{\bar D_{{f_d}}}} {{\sigma _{{f_d}}^{n,k}}}}} \right.
			\kern-\nulldelimiterspace} {{\sigma _{{f_d}}^{n,k}}}}$ can decrease both lower bound and upper bound of ${P_{e,sen}}$ according to the proporty of the Q-function. In other words, enlarging the intervals of delay and Doppler, i.e., ${\bar D_\tau }$ and $\bar D_{{f_d}}$ can decrease the SOP, however, which loses the SSE according to (\ref{eq:eta_sen_2}). Moreover, reducing the estimation error of delay and Doppler, i.e., ${\sigma _{\tau }^{n,k}}$ and ${\sigma _{{f_d}}^{n,k}}$ can also decrease the SOP, with the cost of increasing SINR requirement.} 

\begin{example}
		Following Example \ref{ex:SSE}, if ${{{\bar D_\tau }} \mathord{\left/
				{\vphantom {{{\bar D_\tau }} {{\sigma _{\tau }^{n,k}}}}} \right.
				\kern-\nulldelimiterspace} {{\sigma _{\tau }^{n,k}}}} = {{{\bar D_{{f_d}}}} \mathord{\left/
				{\vphantom {{{\bar D_{{f_d}}}} {{\sigma _{{f_d}}^{n,k}}}}} \right.
				\kern-\nulldelimiterspace} {{\sigma _{{f_d}}^{n,k}}}}=4$, it can get $P_{e,sen}^{max}=6.67 \times 10^{-2}$ and $P_{e,sen}^{min}=5.59 \times 10^{-4}$. If ${{{\bar D_\tau }} \mathord{\left/
				{\vphantom {{{\bar D_\tau }} {{\sigma _{\tau }^{n,k}}}}} \right.
				\kern-\nulldelimiterspace} {{\sigma _{\tau }^{n,k}}}} = {{{\bar D_{{f_d}}}} \mathord{\left/
				{\vphantom {{{\bar D_{{f_d}}}} {{\sigma _{{f_d}}^{n,k}}}}} \right.
				\kern-\nulldelimiterspace} {{\sigma _{{f_d}}^{n,k}}}}=6$, it can get $P_{e,sen}^{max}=4.50 \times 10^{-3}$ and $P_{e,sen}^{min}=5.00 \times 10^{-4}$.  It means that for fixed ${\sigma _{\tau }^{n,k}}$ and ${\sigma _{{f_d}}^{n,k}}$, both $P_{e,sen}^{max}$ and $P_{e,sen}^{min}$ decrease with the increase of ${\bar D_\tau }$ and/or ${\bar D_{{f_d}}}$.
\end{example}

{\bf{Discussion about the proposed SSE and SOP:}} 

(\romannumeral1) The traditional sensing performances (e.g., resolution, unambiguous delay/Doppler, probability of detection, probability of false alarm, and estimation error) can be reflected in SSE and SOP, according to (\ref{eq:eta_sen_2}), (\ref{eq:SOP_min}) and (\ref{eq:SOP_max}). Specifically, for the given $B$ and $T_A$, the delay and Doppler resolutions of ISAC waveform keep constant. At this time, SSE is proportional to the probability of detection $P_{D}$ and the unambiguous delay/Doppler $\tau _{max} $/$f_{d,max}$, but SSE is inversely proportional to the probability of false alarm $P_{fa}$. On the other hand, SOP is proportional to the probability of false alarm $P_{fa}$ and inversely proportional to the probability of detection $P_{D}$. Moreover, SOP is proportional to the CRB on estimation error of delay/Doppler, i.e., ${\sigma _{\tau }^{n,k}}$/${\sigma _{{f_d}}^{n,k}}$.

(\romannumeral2) The trade-off between SSE and SOP is analogous to the trade-off between CSE and BER. Specifically, decreasing $\bar D_{\tau}$ and $\bar D_{f_d}$, i.e., dividing one resolution cell into more sub-cells, can improve SSE according to (\ref{eq:eta_sen_2}). However, $k_{\tau} = {{\bar D_\tau }  \mathord{\left/
		{\vphantom {{\bar D_\tau }  { {{\sigma _{\tau }^{n,k}}} }}} \right.
		\kern-\nulldelimiterspace} { {{\sigma _{\tau }^{n,k}}} }}$ and $k_{f_d} =  {{\bar D_{{f_d}}}  \mathord{\left/
		{\vphantom {{\bar D_{{f_d}}}  { {{\sigma _{{f_d}}^{n,k}}} }}} \right.
		\kern-\nulldelimiterspace} { {{\sigma _{{f_d}}^{n,k}}} }}$ will decrease for fixed ${\sigma _{\tau }^{n,k}}$ and ${\sigma _{{f_d}}^{n,k}}$, which makes SOP increase. If we want to keep SOP constant, it needs to reduce ${\sigma _{\tau }^{n,k}}$ and ${\sigma _{{f_d}}^{n,k}}$, which requires a higher SINR as a cost. Similarly, increasing the modulation order can improve CSE for communications, but a higher SINR is also required to keep BER constant.

(\romannumeral3) SSE and SOP have similar structures and physical meanings to the existing CSE and BER of communications, respectively. Specifically, the dimensions of SSE and CSE are bit/(s$\cdot$Hz). Both SOP and BER indicate the probability that the sensing/communications result cannot meet the requirements of sensing/communication functionality.

The above properties may facilitate analyzing the performance trade-off of ISAC system and optimizing ISAC waveforms. It motivates us to propose a metric set using the proposed SSE and SOP as well as the existing metrics to simultaneously measure the efficiency, the operating range and the reliability of ISAC system.

\vspace{1ex}
\noindent\emph{(3) Proposed metric set for ISAC system}
\vspace{1ex}

\textbf{Metric for efficiency:} We define the weighted CSE and SSE as the metric to measure the resource utilization efficiency of ISAC system, which is given by
\begin{eqnarray}
\eta _{sc}  = {k _{\eta}}{\eta _{com}} + \left( {1 - {k _{\eta}}} \right){\eta _{sen}},
\end{eqnarray} 
where ${k _{\eta}}$ is the weighted factor of efficiency.

\textbf{Metric for operating range:} The delay/Doppler resolution measures the ability to resolve the smallest delay/Doppler difference of communications  channels and sensing targets, representing the lower bound on operating range of ISAC. Moreover, the unambiguous range/velocity of sensing and the maximum tolerable delay/Doppler of communications belong to the metrics measuring the upper bound on operating range of ISAC. The range/velocity can be described by delay/Doppler. Hence, we let the maximum tolerable delay $\tau _{max}$, the maximum tolerable Doppler $f_{d,max}$, the delay resolution $\Delta _{\tau}$, and the Doppler resolution $\Delta _{f_d}$ be the metrics to measure the operating range of ISAC system. The values of $\eta _{sc}$, $\tau _{max}$, $f_{d,max}$, $\Delta _{\tau}$, and $\Delta _{f_d}$ are affected by parameters of ISAC waveform.

{ We briefly discuss the difference between the ambiguity function and our adopted maximum tolerable delay/Doppler as well as delay/Doppler resolution. The relations between the ambiguity function and other sensing and communications metrics are not intuitive, making it difficult to reveal the analytical trade-offs based on the ambiguity function. On the contrary, there are more intuitive relations among our adopted maximum tolerable delay/Doppler as well as delay/Doppler resolution and other sensing metrics and communications metrics. Therefore, we use these metrics to measure the ranges of the ISAC system and derive the analytical trade-offs.}

\textbf{Metric for reliability:} We define the weighted BER and SOP as the metric to measure the reliability of ISAC system, which is written as
\begin{eqnarray}
	P _{e,sc}  = {k _{p}}{P _{e,com}} + \left( {1 - {k _{p}}} \right){P _{e,sen}},
\end{eqnarray} 
where ${k _{p}}$ is the weighted factor of reliability, and ${P _{e,com}}$ denotes BER of communications. The value of $P _{e,sc}$ is affected by not only parameters of ISAC waveform but also SINR.

\section{Performance Analyses for AFDM-ISAC System} \label{sec_design}

{ In this section, the performances of the AFDM-ISAC system are analyzed. We first derive the analytical relationship between metrics and AFDM waveform parameters for the AFDM-ISAC system, which is the cornerstone of performance analyses. Based on this, the performance trade-offs of the AFDM-ISAC system are analyzed.}

\subsection{Analytical Relationship Between Metrics and AFDM Waveform Parameters}

{ This subsection shows the analytical relationship between AFDM waveform parameters and metrics regarding the maximum tolerable delay/Doppler, SSE and CSE.}

For the maximum tolerable delay, to ensure that the cyclic shift values caused by different delays of paths/targets are different, the maximum normalized delay $l_{max}$ needs to satisfy the following constraint\cite{bemani2023affine}
\begin{equation}
	2N{c_1}\left( {{l_{\max }} + 1} \right) \le N.
\end{equation}
Its physical meaning is that the cyclic shift value caused by the maximum tolerable delay will not exceed the cycle period $N$. In addition, to eliminate inter-symbol interference caused by multi-path delay, it is necessary to ensure that the maximum normalized delay does not exceed the length of CPP, i.e., ${l_{\max }} \le N_{cp}$. 
Hence, the maximum tolerable delay ${\tau _{max}}$ of the AFDM-ISAC system can be expressed as
\begin{align}\label{eq:t_m_AFDM}
	{\tau _{max}} = \frac{1}{{B}}{l_{max}} = \left\{ {\begin{array}{*{20}{c}}
			{\frac{1}{{B}}\left( {\frac{1}{{2{c_1}}} - 1} \right),}&\hspace*{-7px}{{c_1} > \frac{1}{{2\left( {{N_{cp}}{\rm{ + }}1} \right)}},}\\
			{\frac{{{N_{cp}}}}{{B}},}&\hspace*{-7px}{{c_1} \le \frac{1}{{2\left( {{N_{cp}}{\rm{ + }}1} \right)}}.}
	\end{array}} \right. 
\end{align}

For the maximum tolerable Doppler, to ensure that the cyclic shift values caused by Dopplers of two adjacent resolvable paths/targets do not overlap, the cyclic shift values caused by Doppler must be smaller than $2Nc_1$, i.e., the maximum normalized Doppler ${\alpha _{\max }}$ is bounded by
\begin{equation}
	2\left( {{\alpha _{\max }} + {\xi _{v}}} \right) \le 2N{c_1} - 1,
\end{equation}
where ${\xi _v}$ denotes the guard interval setting for fractional normalized Doppler\cite{bemani2023affine}.
Hence, the maximun tolerable Doppler $f_{d,max}$  of the AFDM-ISAC system can be given by
\begin{equation}\label{eq:fd_m_AFDM}
	f_{d,max} ={\alpha _{\max }}\Delta_f = \frac{B}{N}\left( {N{c_1} - {\xi _v} - \frac{1}{2}} \right).
\end{equation}

For SSE, given time interval $T_A$ and bandwidth $B$, the resolutions of range and velocity of the AFDM-ISAC system are $\Delta_{\tau} = {1 \mathord{\left/
		{\vphantom {1 {B}}} \right.
		\kern-\nulldelimiterspace} {B}}$ and $\Delta_{f_d} = {1  \mathord{\left/
		{\vphantom {1  { {{T_A}} }}} \right.
		\kern-\nulldelimiterspace} { {{T_A}} }}$. According to Eq. (\ref{eq:eta_sen_2}), SSE of AFDM waveform is expressed as
\begin{align}\label{eq:eta_r_AFDM}
	&{\eta _{sen}} = 2{\tau _{max}}{f_{d,max}}I_{sen}   \nonumber \\
	&=  \left\{ {\begin{array}{*{20}{c}}
			{\frac{{1 - 2{c_1}}}{{{c_1}}}\left( {{c_1} - \frac{{2{\xi _v} + 1}}{{2N}}} \right)I_{sen} ,}&{{c_1} > \frac{1}{{2\left( {{N_{cp}} + 1} \right)}},}\\
			{2{N_{cp}}\left( {{c_1} - \frac{{2{\xi _v} + 1}}{{2N}}} \right)I_{sen} ,}&{{c_1} \le \frac{1}{{2\left( {{N_{cp}} + 1} \right)}}.}
	\end{array}} \right.
\end{align}
where $I_{sen} = H_b\left( {\left( {1 - \gamma } \right){P_{fa}} + \gamma {P_D^{\rm Th}}} \right) - \left( {1 - \gamma } \right)H_b\left( {{P_{fa}}} \right) - \gamma H_b\left( {1 - {P_D^{\rm Th}}} \right) + \gamma{P_D^{\rm Th}}{\log _2}\left({\frac{1}{BT_A{{\bar D _{\tau}}{\bar D _{f_d}}}}} \right)$.

For CSE, given time interval $T_A$ and bandwidth $B$, the number of transmitted AFDM symbols satisfies ${T_A} = {{{N_{sym}}\left( {N + {N_{cp}}} \right)} \mathord{\left/
		{\vphantom {{{N_{sym}}N_s} B}} \right.
		\kern-\nulldelimiterspace} B}$. In each AFDM symbol, the minimum number of pilot symbols and guard symbols $N_{pg,min}=1$ for the guard intervals free case\cite{zhou2024GI}, and thus the maximum number of valued data symbols $M=N-1$. Substituting it into (\ref{eq:CSE}), CSE of AFDM waveform is given by 
	\begin{align}\label{eq:eta_c_AFDM}
		{\eta _{com}} = \frac{\left(N-1\right)}{{BT_A}} {N_{sym}}{\log _2}{M_{mod }}  = \frac{\left(N-1\right){\log _2}{M_{mod }} }{{N + {N_{cp}}}}.
	\end{align}
At this time, the weighted CSE and SSE is written as Eq. (\ref{eq:sum_effi_2}).

\subsection{Performace Trade-offs Analyses of AFDM-ISAC System} \label{wave_design_ana}

\vspace{1ex}
\noindent\emph{(1) Trade-offs between sensing performaces}
\vspace{1ex}

{ Firstly, we analyze the trade-off of sensing performances between (\romannumeral1) SSE and SOP, (\romannumeral2) ${\tau _{max}}$ and $f_{d,max}$.}
According to (\ref{eq:SOP_max}) and (\ref{eq:eta_r_AFDM}), there is a trade-off between our proposed SSE and SOP for sensing functionality. For the given $B$, $T_A$, $P_{fa}$ and ${\rm SINR^{n,k}}$ in the $\left(n,k\right)$-th delay-Doppler cell, the $P_D$, ${\sigma _{\tau }^{n,k}}$ and ${\sigma _{{f_d}}^{n,k}}$ keep constant. At this time, SSE is traded off against SOP by changing parameter $\bar D_{\tau}$ and/or $\bar D_{f_d}$. 

\begin{example}\label{ex:SSE_vs_SOP}
	Waveform parameters are $M_{mod}=16$, ${N_{cp}}=574$, $N=8192$, $N_{sym} = 28$, $\xi _v=4$, $k_{\tau}=k_{f_d} \in \left[1,2,\cdots, 8\right]$, SINR is equal to $10, 15, 20, 25$ dB, and other parameters are set as Example \ref{ex:SSE}. The trade-offs between SSE and the upper bound on SOP of AFDM-ISAC and OFDM-ISAC systems are shown in Fig. \ref{fg:Pe_sen_vs_effi_sen}, where $c_1 = {{{\left(2\xi _v+1\right)}} \mathord{\left/
			{\vphantom {{{\left(2\xi _v+1\right)}} \left(2N\right)}} \right.
			\kern-\nulldelimiterspace} \left(2N\right)} + {{{3}} \mathord{\left/
			{\vphantom {{{3}} \left(2N\right)}} \right.
			\kern-\nulldelimiterspace} \left(2N\right)}$, $c_2=0$ for AFDM-ISAC system, and $c_1=c_2=0$ for OFDM-ISAC system. { Each curve depicts the trade-off between SSE and the upper bound on SOP at a given SINR. For a given SINR, ${\sigma _{\tau }^{n,k}}$ and ${\sigma _{{f_d}}^{n,k}}$ are calculated by (\ref{eq:sigma_R_2}). At this time, increasing $k_{\tau}$ and $k_{f_d}$ means the growth of $\bar D_{\tau}$ and $\bar D_{f_d}$, which improves the SOP performance and decreases SSE according to (\ref{eq:SOP_max}) and (\ref{eq:eta_r_AFDM}). For example, for AFDM with a 25 dB of SINR, when $k_{\tau}$ and $k_{f_d}$ increase from 1 to 8, SOP performance is improved from $4\times10^{-1}$ and $6\times10^{-4}$, and SSE decreases from 1.2 bit/(s$\cdot$Hz) to 0.56 bit/(s$\cdot$Hz), as shown in Fig. \ref{fg:Pe_sen_vs_effi_sen}.} Therefore, we can observe that SOP performance improve with the decrease of SSE or the increase of SINR for both AFDM-ISAC and OFDM-ISAC systems, which is analogous to the relation among CSE, BER and SINR for communications systems. Moreover, given the same SOP, SSE of our AFDM-ISAC system significantly outperforms the OFDM-ISAC system.
\end{example}

\begin{figure}[!htbp] 	
	\centerline{\includegraphics[width=2.3 in]{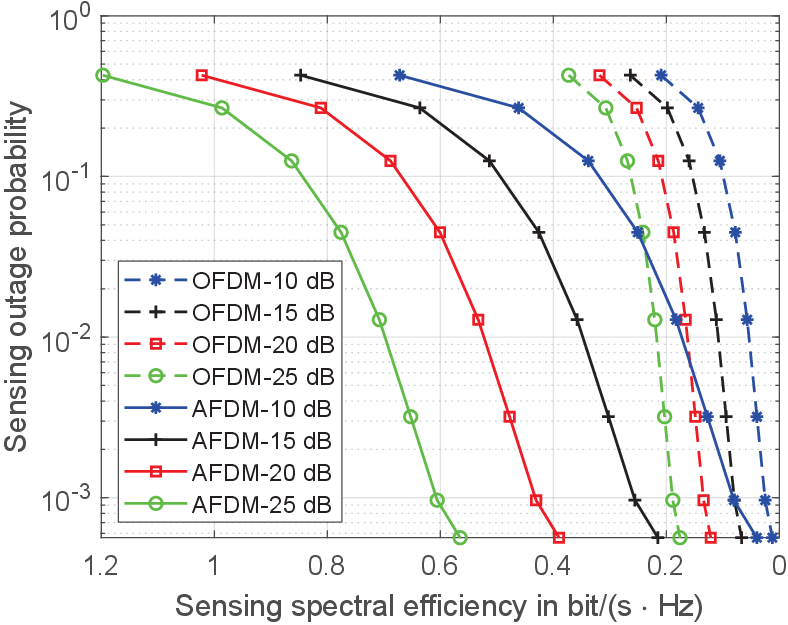}}
	\caption{The derived trade-off between SSE and SOP with different SINRs of AFDM-ISAC and OFDM-ISAC systems.}
	\label{fg:Pe_sen_vs_effi_sen}	
\end{figure}

\begin{figure*}[htbp]
	\vspace*{-10pt} 
	\begin{align} \label{eq:sum_effi_2}
		\eta _{sc}  = \left\{ {\begin{array}{*{20}{c}}
				{{k _{\eta}}\frac{N-1}{{N + {N_{cp}}}}{{\log }_2}{M_{mod}} + \left( {1 - {k _{\eta}}} \right)\left[ {\frac{{1 - 2{c_1}}}{{{c_1}}}\left( {{c_1} - \frac{{2{\xi _v} + 1}}{{2N}}} \right)I_{sen}} \right],}&{{c_1} > \frac{1}{{2\left( {{N_{cp}} + 1} \right)}}},\\
				{{k _{\eta}}\frac{N-1}{{N + {N_{cp}}}}{{\log }_2}{M_{mod}} + \left( {1 - {k _{\eta}}} \right)\left[ {2{N_{cp}}\left( {{c_1} - \frac{{2{\xi _v} + 1}}{{2N}}} \right)I_{sen} } \right],}&{{c_1} \le \frac{1}{{2\left( {{N_{cp}} + 1} \right)}}}.
		\end{array}} \right.
	\end{align}
	\hrule
	\vspace*{-10pt} 
\end{figure*}

Moreover, according to (\ref{eq:t_m_AFDM}) and (\ref{eq:fd_m_AFDM}), the maximum tolerable delay ${\tau _{max}}$ is traded off against the maximum tolerable Doppler $f_{d,max}$ by changing parameter $c_1$. Specifically, the maximum tolerable Doppler is directly proportional to $c_1$, but the maximum tolerable delay is inversely proportional to $c_1$, when ${c_1} > \frac{1}{{2\left( {{N_{cp}}{\rm{ + }}1} \right)}}$. At this time, the relation between the maximum tolerable delay and Doppler for our AFDM-ISAC system, i.e., ${\tau _{max}^A}$ and $f_{d,max}^A$, is given by
\begin{align}\label{eq:fd_m_AFDM_2}
	{f_{d,max}^{A}} = \frac{B}{{2\left( {B{\tau _{max}^A} + 1} \right)}} - \frac{{B\left( {2{\xi _v} + 1} \right)}}{{2N}}.
\end{align}
When ${c_1} \le \frac{1}{{2\left( {{N_{cp}}{\rm{ + }}1} \right)}}$, the maximum tolerable Doppler is still directly proportional to $c_1$, but the maximum tolerable delay keep constant, for fixed $N_{cp}$.

On this basis, we analytically compare the trade-offs between ${\tau _{max}}$ and ${f_{d,max}}$ of AFDM-ISAC and OFDM-ISAC systems.
As a comparison, for OFDM-ISAC system, $\tau_{max}^O = T_{cp}$, and $f_{d,max}^O = {1 \mathord{\left/
		{\vphantom {1 {\left( {2{T_p}} \right)}}} \right.
		\kern-\nulldelimiterspace} {\left( {2{T_p}} \right)}}$, where $\tau_{max}^O$ and $f_{d,max}^O$ denote the maximum tolerable delay and Doppler, and $T_p$ is the duration of an OFDM symbol considering CP\cite{sturm2011waveform}. Consequently, the trade-off between ${\tau _{max}^O}$ and ${f_{d,max}^O}$ of OFDM-ISAC system is given by
\begin{equation}\label{eq:t_vs_fd_OFDM}
	{f _{d,max}^O} = \frac{{\left(1-{\eta _{cp}}\right)}}{{2{\tau _{max}^O}}},
\end{equation}
where ${\eta _{cp}} = {{\left( {{T_p} - {T_{cp}}} \right)} \mathord{\left/
		{\vphantom {{\left( {{T_p} - {T_{cp}}} \right)} {{T_p}}}} \right.
		\kern-\nulldelimiterspace} {{T_p}}}={{N} \mathord{\left/
		{\vphantom {{N} {{\left(N+N_{cp}\right)}}}} \right.
		\kern-\nulldelimiterspace} {{\left(N+N_{cp}\right)}}}$.
To compare intuitively, We let $\eta _{com}^A = \eta _{com}^O$ by setting identical parameters $M$, $N$, $N_{cp}$ and $M_{mod}$. Moreover, we set ${c_1} = \frac{1}{{2\left( {{N_{cp}} + 1} \right)}}$ to have $\tau_{max}^A = \tau_{max}^O = {{{N_{cp}}} \mathord{\left/
		{\vphantom {{{N_{cp}}} B}} \right.
	\kern-\nulldelimiterspace} B}$. At this time, according to (\ref{eq:fd_m_AFDM_2}) and (\ref{eq:t_vs_fd_OFDM}), we have
\begin{align}\label{eq:fd_m_AFDM_3}
	{f_{d,max}^A} &= \frac{B}{N}\left( {N\frac{1}{{2\left( {{N_{cp}} + 1} \right)}} - {\xi _v} - \frac{1}{2}} \right) \nonumber \\	
	& \approx \frac{\left( {1 - {\eta _{cp}}} \right)}{{2{\tau_{max}^A}}} \cdot \frac{{\left( {2{\xi _v} + 2} \right){\eta _{cp}} - \left( {2{\xi _v} + 1} \right)}}{{{\eta _{cp}}\left( {1 - {\eta _{cp}}} \right)}} \nonumber \\	
	& =  {f _{d,max}^O} \cdot {k_0},  
\end{align}
where $k_0 = {{\left( {2{\xi _v} + 2} \right){\eta _{cp}} - \left( {2{\xi _v} + 1} \right)} \mathord{\left/
		{\vphantom {{\left( {2{\xi _v} + 2} \right){\eta _{cp}} - \left( {2{\xi _v} + 1} \right)} {{\eta _{cp}}\left( {1 - {\eta _{cp}}} \right)}}} \right.
		\kern-\nulldelimiterspace} {{\eta _{cp}}\left( {1 - {\eta _{cp}}} \right)}}$. If ${\eta _{cp}} > \frac{1}{2}\left[ {\sqrt {\left( {2\xi {}_v + 1} \right)\left( {2\xi _v + 5} \right)}  - \left( {2\xi {}_v + 1} \right)} \right]$, we can get $k_0>1$ and thus
\begin{equation}\label{eq:fd_m_AFDM_4}	
{f_{d,max}^A} = {f _{d,max}^O} \cdot {k_0} > {f _{d,max}^O}.
\end{equation}
\begin{example}
When parameters are the same as Example \ref{ex:SSE_vs_SOP}, it can get ${\eta _{cp}} = {{8192} \mathord{\left/
{\vphantom {{8192} {\left( {8192 + 574} \right)}}} \right.
\kern-\nulldelimiterspace} {\left( {8192 + 574} \right)}} = 0.934$, and then $k_0 = 5.6$; that is, ${f_{d,max}^A}$ is 5.6 times ${f_{d,max}^O}$, while $\tau_{max}^A = \tau_{max}^O$ and $\eta_{com}^A = \eta_{com}^O$. According to (\ref{eq:eta_r_AFDM}), it results that SSE of the AFDM-ISAC system is also 5.6 times that of the OFDM-ISAC system.
\end{example}
	
\vspace{1ex}
\noindent\emph{(2) Trade-offs between performaces of sensing and communications}
\vspace{1ex}

Next, the trade-offs between performance metrics of sensing and communications is revealed, i.e., the trade-off between (\romannumeral1) CSE and ${\tau_{max}}$, (\romannumeral2) CSE and SSE.
According to (\ref{eq:t_m_AFDM}) and (\ref{eq:eta_c_AFDM}), CSE is traded-off against the maximum tolerable delay ${\tau_{max}}$ by changing parameters $N_{cp}$, when ${c_1} \le \frac{1}{{2\left( {{N_{cp}} + 1} \right)}}$. Specifically, increasing $N_{cp}$ enlarges the maximum tolerable delay but decreases CSE, for the fixed $N$. The relation between ${\eta _{com}^A}$ and ${\tau_{max}^A}$ for our AFDM-ISAC system can be expressed as
\begin{equation}
{\eta _{com}^A} = \frac{{\left(N-1\right){{\log }_2}{M_{mod}}}}{{N + B{\tau _{max}^A}}}.
\end{equation}

On this basis, we can get that CSE is traded-off against SSE by changing parameters $N_{cp}$, when ${c_1} \le \frac{1}{{2\left( {{N_{cp}} + 1} \right)}}$. The relation between ${\eta _{com}^A}$ and ${\eta _{sen}^A}$ for our AFDM-ISAC system is given by
\begin{equation}
{\eta _{com}^A} = \frac{{\left(N-1\right){{\log }_2}{M_{mod }}}}{{N + {{N{\eta _{sen}^A}} \mathord{\left/
				{\vphantom {{N{\eta _{sen}^A}} {\left\{ {\left[ {2N{c_1} - \left( {2{\xi _v} + 1} \right)} \right]{I_{sen}}} \right\}}}} \right.
				\kern-\nulldelimiterspace} {\left\{ {\left[ {2N{c_1} - \left( {2{\xi _v} + 1} \right)} \right]{I_{sen}}} \right\}}}}}.
\end{equation}
When ${c_1} \ge \frac{1}{{2\left( {{N_{cp}} + 1} \right)}}$, CSE decreases with increasing $N_{cp}$, but SSE keep contant.

\subsection{Design Guideline on Selecting AFDM Parameters}

Since the sensing and communication performances are strongly dependent on AFDM parameters $c_1$, $N_{cp}$ and $N$, we then provide a design guideline on selecting these AFDM parameters. Specifically, utilizing the derived analytical relationship in Eq. (\ref{eq:t_m_AFDM}) and Eq. (\ref{eq:fd_m_AFDM}), for the given requirements of maximum tolerable delay and Doppler, i.e., ${\tau _{max}}$ and ${f_{d,max}}$, the selections of $c_1$, $N_{cp}$ and $N$ should be bounded by 
\begin{align} \label{eq:cons}
\frac{{{f_{d,max}}}}{{B}} + \frac{{2{\xi _v} + 1}}{{2N}} &\le {c_1} \le \frac{1}{{2\left(B{\tau _{max}} + 1\right)}}, \\
\frac{{\left( {2{\xi _v} + 1} \right)\left( {B{\tau _{max}} + 1} \right)B}}{{B -  2{f_{d,max}}\left( {B{\tau _{max}} + 1} \right)}} &\le N  ,    \tag{\ref{eq:cons}a} \nonumber \\
{{B{\tau _{max}}}} &\le  {N_{cp}} \le N, \tag{\ref{eq:cons}b}
\end{align}
where (\ref{eq:cons}a) is obtained from the constraint that the upper bound should be larger than the lower bound in (\ref{eq:cons}).

\section{Estimation Method of Target Parameters for AFDM-ISAC System}


In this section, an efficient estimation method for our AFDM-ISAC system is proposed to estimate the delay and the Doppler of sensing targets in the AFT-Doppler domain. We start from deriving the input-output relationship of the AFDM-ISAC system in the AFT-Doppler domain. Then, the delay and the integral/fractional parts of normalized Doppler are extracted in the AFT-Doppler domain. 

\subsection{Input-Output Relationship of AFDM-ISAC System in the AFT-Doppler Domain}

Considering a sensing channel with $P$ point-like targets, the time delay and Doppler shift of the $i$-th target are denoted by ${\tau _i}$, ${f_{d,i}}$, respectively. Following (\ref{eq:received_sig_time}), after passing through the sensing channel, the received ISAC waveform in the time domain is given by \cite[Eq. (6)]{wu2022integrating}
\begin{equation}
	\mathbf{r}_\mathrm{I}\left[ n \right] = \sum\nolimits_{i = 1}^P {{{\chi}_i\left[n\right]}{e^{ - j2\pi {f_{d,i}}{\tau _i}}}} \mathbf{s}_\mathrm{I}\left[ {n - {l_i}} \right]{e^{j2\pi {f_i}n}} + {\mathbf{w}}_\mathrm{s}\left[ n \right],
\end{equation}
where $n \in \left[ { - {N_{cp}},N - 1 + \left( {{N_{sym}} - 1} \right){N_s}} \right]$, and $\mathbf{{w}}_\mathrm{s}\sim \mathcal {CN}\left( {0\text{,} \sigma_{s} ^2\mathbf{I}} \right)$ is an additive Gaussian noise vector. { The value of ${\chi}_i$ denotes the gain coefficient of the sensing channel corresponding to the $i$-th target, which is affected by the radar cross section (RCS) of the target. Due to the fluctuation of the target, the RCS is a random variable, and the gain coefficient ${\chi}_i$ may be changed at each sample. Following \cite[Chap. 7]{mark2010principles}, this paper models the gain coefficient ${\chi}_i$ as the Swerling fluctuation model, whose pdf is given by\cite{mark2010principles}
	\begin{equation}
	f\left( \chi  \right) = \frac{m}{{\left( {m - 1} \right)!\bar \chi }}{\left( {\frac{{m\chi }}{{\bar \chi }}} \right)^{m - 1}}{e^{\left( { - \frac{{m\chi }}{{\bar \chi }}} \right)}},
	\end{equation}
	where ${\bar \chi }$ deontes the mean of $\chi$, and the degree of the Chi-distribution is $2m$.

\begin{remark}
	This paper considers two types of Swerling models, i.e., the Sweiling 0 model and the Swerling 3 model\cite{mark2010principles}. The former indicates that the target is nonfluctuating, thus $\chi\left[n\right] = \bar \chi, \forall n$. The latter corresponds to $m=2$ and the pdf of $\chi$ in the Swerling 3 model is rewritten as\cite{mark2010principles}
	\begin{equation}
	f\left( \chi  \right) = \frac{4\chi}{{\bar \chi }^2}{e^{\left( { - \frac{{2\chi }}{{\bar \chi }}} \right)}}.
	\end{equation}
\end{remark}
}

After serial to parallel conversion and discarding CPP, the received signal matrix in the delay-time domain is
\begin{align}
	&\mathbf{R}\left[ {n,k} \right] {=} \frac{1}{\sqrt{N}} \sum\limits_{i = 1}^P {{{\chi}_i\left[n+kN_s\right]}{e^{ - j2\pi {f_{d,i}}{\tau _i}}}}{e^{j2\pi {f_i}\left( {n{\rm{ + }}kN_s} \right)}}  \times  \nonumber \\
	&\sum\limits_{m=0}^{N-1} { {\mathbf{X}\left[ m,k \right]} {e^{j2\pi \left( {{c_1}{\left(n-l_i\right)^2}{\rm{ + }}{c_2}{m^2} + \left(n-l_i\right)m/N} \right)}} } 	{+} {\mathbf{W}}_\mathrm{t}\left[ {n,k} \right],
\end{align}
where ${\mathbf{W}}_\mathrm{t}$ denotes noise matrix in the delay-time domain, $n = 0, \ldots ,N {-} 1$, and $k = 0, \ldots ,N_{sym} {-} 1$. 

\begin{remark}
	Since the gain coefficient $\chi _i$ is a random variable in the Swerling 3 model, it is hard to reveal the deterministic input-output relation in the AFT domain. To this end, we first derive the deterministic input-output relation in the AFT domain based on the Swerling 0 model, i.e., $\chi _i\left[n\right] = \bar{\chi} _i, \forall n$, such that the primary properties can be discovered and the corresponding parameter estimation method can be designed. Then, the simulations will consider both the Swerling 0 and Swerling 3 models to verify the applicability of the proposed system under the Swerling 3 models. Numerical results will show that our proposed estimation system can also work under the Swerling 3 model with a slight performance loss. 	
\end{remark}

Motivated by \cite[Eq. (26)]{bemani2023affine},
performing $N$ points DAFT on each column of $\mathbf{R}$, the input-output relationship of our AFDM-ISAC system in the AFT-time domain under the Swerling 0 model can be written in the matrix form as
\begin{equation} \label{eq:in_out_relation_1}
	\mathbf{Y} =\sum\nolimits_{i = 1}^P {{{\bar{\chi} _i}{e^{ - j2\pi {f_{d,i}}{\tau _i}}}}{\mathbf{H}_{\mathrm{A},i}}\mathbf{X}\mathbf{D}_i} + \mathbf{W}_\mathrm{a},
\end{equation}
where $\mathbf{D}_i= \mathrm{diag}\left( {{e^{ j2\pi {f_i}N_sk}}},k {=} 0, \ldots ,{N_{sym}} {-} 1 \right)$, $\mathbf{W}_\mathrm{a}=\mathbf{A}\mathbf{W}_\mathrm{t}$, and  ${\mathbf{H}_{\mathrm{A},i}}\left[ {p,q} \right] = \frac{1}{N}{e^{j\frac{{2\pi }}{N}\left( {N{c_1}l_i^2 - ql_i + N{c_2}\left( {{q^2} - {p^2}} \right)} \right)}}\mathcal{F}_i\left( {p,q} \right)$ with ${\mathcal{F}_i}\left( {p,q} \right) {=} \frac{{{e^{ - j2\pi \left( {p - q - {\nu _i} + 2N{c_1}{l_i}} \right)}} - 1}}{{{e^{ - j\frac{{2\pi }}{N}\left( {p - q - {\nu _i} + 2N{c_1}{l_i}} \right)}} - 1}}$. 

Following {\cite{bemani2023affine}, $c_2$ is set to be a rational number sufficiently smaller than $\frac{1}{{2N}}$. Thus, if $c_2$ is small enough, the value of $N{c_2}\left( {{{q_i}^2} - {p^2}} \right)$ will approach zero. For the integral normalized Doppler shift case, i.e., $\nu _i$ is integer, the input-output relationship in the AFT-time domain shown in Eq. (\ref{eq:in_out_relation_1}) can be rewritten as\footnote{While the parameter estimation method is derived utilizing the input-output relation in the integer normalized Doppler shift case, it is also suitable for the fractional normalized Doppler shift case.} 
\begin{align}\label{eq:in_out_relation_3}
	\mathbf{Y}\left[ {p,k} \right] \approx & \sum\nolimits_{i = 1}^P {\zeta _i}{e^{ j2\pi {f_i}\left( {N {+} {N_{cp}}} \right)k}}{e^{ - j\frac{{2\pi }}{N}{l_i}p}} \times \nonumber\\
		&	\mathbf{X}\left[ {{{\left\langle {p + lo{c_i}} \right\rangle}_N},k} \right] 
{+} \mathbf{W}_\mathrm{a}\left[ {p,k} \right],
\end{align}
where ${\zeta _i} = {{\bar{\chi} _i}{e^{ - j2\pi {f_{d,i}}{\tau _i}}}}{e^{j\frac{{2\pi }}{N}\left( {N{c_1}l_i^2} - l_i lo{c_i} \right)}}$.  

We can see from (\ref{eq:in_out_relation_3}) that there is a linear phase shift along the $k$-axis (time domain), which is caused by the  Doppler shift. Meanwhile, there are the linear phase shift and the cyclic shift of information symbols $\mathbf{X}$ along the $p$-axis (AFT domain). The former is caused by the delay $l_i$, and the latter is caused by the delay $l_i$ and the integral part of normalized Doppler shift, i.e., ${\alpha _i}$. Due to this couple of the linear phase shift and the cyclic shift along the AFT domain, the PSLR of the radar image obtained by algorithms in \cite{sturm2011waveform,zeng2020joint} will be severely deteriorated when the Doppler shift is significant. Numerical results will verify this conclusion. 

The effect of delay and the integral part of normalized Doppler shift in the AFT domain can be decoupled by actively compensating. According to the assumption that $N_{cp} > l_{max}$, the delay $l$ is satisfied $0 \le {l} < {N_{cp}}$. For each delay $l$, we generate a compensation matrix $\mathbf{L}^l = \mathrm{diag}\left( {{e^{j\frac{{2\pi }}{N}p{l}}},p {=} 0, \ldots ,N - 1} \right)$ and multiply it by $\mathbf{Y}$, i.e., $\mathbf{Y}^l_\mathrm{c} = \mathbf{L}^l\mathbf{Y}$ and 
\begin{align}\label{eq:compensation}
	\mathbf{Y}^l_\mathrm{c}\left[ {p,k} \right] = &\sum\limits_{i = 1}^P {{\zeta _i}{e^{ j2\pi {f_i}N_sk}}{e^{j\frac{{2\pi }}{N}\left( {l - {l_i}} \right)p}}\mathbf{X}\left[ {{{\left\langle {p + lo{c_i}} \right\rangle}_N},k} \right]}\nonumber\\
	&{+} \mathbf{W}_\mathrm{c}^{l}\left[ {p,k} \right], \ l {=} 0, \ldots ,N_{cp} - 1.
\end{align} 
where $\mathbf{W}^{l}_\mathrm{c} = \mathbf{L}^l \mathbf{W}_\mathrm{a}$. This process results in $N_{cp}$ matrices. 

Then, the cyclic shift value of information symbols $\mathbf{X}$ in matrix $\mathbf{Y}_\mathrm{c}^l$ can be estimated by performing the matched filter in the AFT domain. The resulting matrix can be written as
\begin{align}
	\mathbf{Z}^{l} \left[p,k\right] &= { \sum\limits_{n = 0}^{N - 1} {{\mathbf{X}^*}\left[ { - n,k} \right]\mathbf{Y}_\mathrm{c}^l\left[ {p - n},k \right]}} \nonumber \\
	&= {\mathbf{F}^{\rm{H}}}\left( {{{\left( {\mathbf{F}\mathbf{Y}^l_\mathrm{c}} \right)}} \odot \left( {\mathbf{F}\mathbf{X}} \right)^*} \right)\nonumber \\
	&= \sum\limits_{i = 1}^P {{{\tilde \zeta} _i}{e^{ j2\pi {f_i}N_sk}}\mathbf{\mu}_k\left[{{{\left\langle {p + lo{c_i}} \right\rangle}_N}},{ {\left( {l - {l_i}} \right)}}\right]} \nonumber \\
	& \quad\quad\quad {+} \mathbf{\tilde W}^{l}\left[p,k\right], 
\end{align} 
where ${\tilde \zeta} _i = {\bar{\chi} _i}{e^{ - j2\pi {f_{d,i}}{\tau _i}}}{e^{j\frac{{2\pi }}{N}\left( {N{c_1}l_i^2} -  lo{c_i} l \right)}}$, $\mathbf{\tilde W}^{l}\left[p,k\right] = \frac{1}{\sqrt{N}} \sum\nolimits_{n = 0}^{N - 1} {\mathbf{W}^{l}_\mathrm{c}\left[ {n,k} \right]{\mathbf{X}^ * }\left[ {{\left\langle p-n \right\rangle}_N,k} \right]}$, and { $\mathbf{\mu}_{k}\left[p,m\right] =  \frac{1}{\sqrt{N}}\sum\nolimits_{n = 0}^{N - 1} {\mathbf{X}\left[ {n,k} \right]{\mathbf{X}^ * }\left[ {{\left\langle n-p \right\rangle}_N,k} \right]e^{j\frac{{2\pi }}{N}mn}}$} denoting the periodic ambiguity function of the random symbol vector $\mathbf{X}\left[:,k\right]$. 
Then, performing the $N_{sym}$-point DFT at each row of $\mathbf{Z}^l$, we can get the matrix in the AFT-Doppler domain as (\ref{eq:in_out_rela_AFT_Z_3}) on the next page. In (\ref{eq:in_out_rela_AFT_Z_3}), $\mathbf{\tilde W}_\mathrm{F}^{l} = \mathbf{\tilde W}^{l}\mathbf{F}$, ${{\nu }_i'} = \left(N+N_{cp}\right)f_i = \frac{{{f_{d,i}}}}{{\Delta f'}} {=} {\beta _i} {+} {b_i}  {\in} \left[ { - {\nu' _{\max }}, {\nu' _{\max }}} \right]$ denoting the Doppler shift normalized with respect to ${\Delta f'}={B \mathord{\left/
			{\vphantom {B {{\left(N+N_{cp}\right)}}}} \right.
			\kern-\nulldelimiterspace} {{\left(N+N_{cp}\right)}}}$, and ${\beta _i} {\in} \left[ { - {\beta _{\max }},{\beta _{\max }}} \right]$ and ${b_i} {\in} \left( { - \frac{1}{2},\frac{1}{2}} \right]$ are the integral/fractional part of ${{\nu }_i'}$.
\begin{figure*}
	{
	\begin{align} \label{eq:in_out_rela_AFT_Z_3}
	\mathbf{Z}^{l}_\mathrm{F} \left[p,k\right] &= \mathbf{Z}^{l}\mathbf{F} = {\mathbf{F}^{{H}}}\left( {{{\left( {\mathbf{F}\mathbf{Y}^l_\mathrm{c}} \right)}} \odot \left( {\mathbf{F}\mathbf{X}} \right)^*} \right)\mathbf{F} \nonumber \\
	&= \frac{1}{{\sqrt {N{N_{sym}}} }}\sum\limits_{i = 1}^P {{{\tilde \zeta }_i}\sum\limits_{k = 0}^{{N_{sym}} - 1} {\sum\limits_{m = 0}^{N - 1} {X\left[ {m,k} \right]{X^ * }\left[ {{{\left\langle {m - \left( {p + lo{c_i}} \right)} \right\rangle }_N},k} \right]{e^{j\frac{{2\pi }}{N}\left( {l - {l_i}} \right)m}}} {e^{j\frac{{2\pi }}{{{N_{sym}}}}\left( {{N_{sym}}{{\nu }_i'} - n} \right)k}}} } + \mathbf{\tilde W}_\mathrm{F}^{l}\nonumber \\
	&= \frac{1}{{\sqrt {N{N_{sym}}} }}\sum\limits_{i = 1}^P {{{\tilde \zeta }_i}\sum\limits_{k = 0}^{{N_{sym}} - 1} {\mathbf{\mu}_k\left[{{{\left\langle {p + lo{c_i}} \right\rangle}_N}},{ {\left( {l - {l_i}} \right)}}\right] {e^{j\frac{{2\pi }}{{{N_{sym}}}}\left( {{N_{sym}}{{\nu }_i'} - n} \right)k}}} } + \mathbf{\tilde W}_\mathrm{F}^{l}.
	\end{align} }
	\hrule
	\vspace*{-10pt} 
\end{figure*}

{ Following \cite{zeng2020joint}, for a given matrix $\mathbf{X}$ with random communications symbols, the peak of matrix $\left| {\mathbf{\mu}_{k}} \right|$ is $\left| {\mathbf{\mu}_{k}}\left[0,0\right] \right|$. Other cross terms (i.e., sidelobe) $\mathbf{\mu}_k\left[p,m\right]$ ($p \ne 0$, $m \ne 0$) is small compared with the peak when $N$ is large enough. This paper mainly focuses on finding the position of the peak and thus ignores the effects of side lobes when deriving the peak position following \cite{zeng2020joint}, i.e., $\mathbf{\mu}_k\left[p,m\right] \approx c_0 \delta\left[p\right]\delta\left[m\right]$. At this time, Eq. (\ref{eq:in_out_rela_AFT_Z_3}) can be approximately expressed as}

\begin{align} \label{eq:in_out_rela_AFT_Z_4}
	&\mathbf{Z}^{l}_\mathrm{F} \left[p,k\right]  \approx \frac{1}{\sqrt{N_{sym}}} \sum\limits_{i = 1}^P {{\tilde \zeta} _i}c_0\delta\left[{{\left\langle {p + lo{c_i}} \right\rangle}_N}\right]\delta\left[{{\left( {l - {l_i}} \right)}}\right]  \nonumber \\
	& \ \ \ \  \ \ \ \ \ \ \ \  \times	\frac{{{e^{ - j2\pi \left( {k - {N_{sym}}{{\nu }_i'}} \right)}} - 1}}{{{e^{{{ - j2\pi \left( {k - {N_{sym}}{{\nu }_i'}} \right)} \mathord{\left/
								{\vphantom {{ - j2\pi \left( {k - {N_{sym}}{{\nu }_i'}} \right)} {{N_{sym}}}}} \right.
								\kern-\nulldelimiterspace} {{N_{sym}}}}}} - 1}} {+} \mathbf{\tilde W}_\mathrm{F}^{l}\left[p,k\right]  \nonumber \\
	&  = \left\{ {\begin{array}{*{20}{c}}
			{\sum\limits_{i = 1}^P {{{{\tilde \zeta} }_i}{c_0}\sqrt{N_{sym}}}{+} \mathbf{\tilde W}_\mathrm{F}^{l}\left[p,k\right] ,}&{C. 1,}\\
			{\mathbf{\tilde W}_\mathrm{F}^{l}\left[p,k\right],}&{otherwise,}
	\end{array}} \right.
\end{align} 
where $p = 0, \ldots ,N {-} 1$, $k = 0, \ldots ,N_{sym} {-} 1$, and $C. 1$ denotes the constraint that $l=l_i \ {\rm and} \ $ ${{\left\langle {p + lo{c_i}} \right\rangle}_N} =0   \ {\rm and} \ {\left\langle k - {N_{sym}}{{\nu }_i'}\right\rangle}_{N_{sym}} = 0 $. Since ${\nu }_i' = {\beta _i} {+} {b_i}$, it can get ${\left\langle k - {N_{sym}}{{\nu }_i'}\right\rangle}_{N_{sym}} = {\left\langle k - {N_{sym}}{{b }_i}\right\rangle}_{N_{sym}}$. { Eq. (\ref{eq:in_out_rela_AFT_Z_4}) reveals that $P$ peaks will occur if and only if the constraint $C. 1$ holds. In other words, there will be $P$ peaks at the indices of ${\bar p_i}$ and ${\bar k_i}$ in the matrix $\mathbf{Z}^{\bar l_i}_\mathrm{F}$, where
$\bar l_i = l$, ${\bar p_i} = {{\left\langle {{\alpha _i} - 2N{c_1}{l_i} } \right\rangle}_N}$ and ${\bar k_i = {\left\langle {N_{sym}}{{b }_i}  \right\rangle}_{N_{sym}}}$.}

While this conclusion is obtained by ignoring the cross terms of $\mathbf{\mu}_k$, it still holds when the cross-term exists verified by numerical results. It means that the peak indices contain information on the delay and the integral/fractional parts of the normalized Doppler of targets, which enables us to design the following parameter estimation method. 

\subsection{Proposed Parameter Estimation Method of Sensing Target}

Accordingly, we propose an estimation algorithm to estimate the delay and the Doppler of sensing targets. First, according to the above analyses, the received signal matrix $\mathbf{Y}$ in the AFT-time domain is multiplied by $N_{cp}$ compensation matrices $\mathbf{L}^l$, resulting in $N_{cp}$ matrices $\mathbf{Y}^l_\mathrm{c}$. The matched filter in the AFT domain and DFT in the time domain are performed on matrices $\mathbf{Y}^l_\mathrm{c}$, resulting in $N_{cp}$ matrices $\mathbf{Z}^{l}_\mathrm{F}$ in the AFT-Doppler.

Then, we use the classical constant false alarm rate (CFAR) algorithm to search the indices of the peak in matrices $\mathbf{Z}^{l}_\mathrm{F}$. 
If the magnitude of the peak at the indices of ${\bar p_i}$ and ${\bar k_i}$ in the $\bar l_i$-th matrix $\mathbf{Z}^{\bar l_i}_\mathrm{F}$ exceeds the threshold, there is a target whose corresponding normalized delay estimation is given by $\hat l_i = \bar l_i$, and thus the delay estimation is given by
\begin{eqnarray}\label{eq:esti_delay}
	\hat \tau _i = \frac{1}{B} \bar l_i .
\end{eqnarray}
According to the constraint that ${\bar p_i} = {{\left\langle {{\alpha _i} - 2N{c_1}{l_i} } \right\rangle}_N}$, the integral part of normalized Doppler shift is estimated by
\begin{equation}\label{eq:int_Doppler}
	{\hat \alpha _i} =  {\left\langle {{{\bar p}_i} + 2N{c_1}{{\hat l}_i} + N{c_1}} \right\rangle _N} - N{c_1}.
\end{equation}
Moreover, according to the equality of ${\bar k_i = {\left\langle {N_{sym}}{{b }_i}  \right\rangle}_{N_{sym}}}$, the estimation of the fractional part of normalized Doppler shift, i.e., ${b_i}$, is given by
	\begin{equation}\label{eq:frac_Doppler}
					{{\hat b}_i} = \frac{{{{{N_{sym}}} \mathord{\left/
									{\vphantom {{{N_{sym}}} 2}} \right.
									\kern-\nulldelimiterspace} 2} - {{\left\langle {{{{N_{sym}}} \mathord{\left/
												{\vphantom {{{N_{sym}}} 2}} \right.
												\kern-\nulldelimiterspace} 2}-{{\bar k}_i} } \right\rangle}_{N_{sym}}} }}{{{N_{sym}}}}.
\end{equation} 
{ The range of ${{\hat b}_i}$ is $\left(-0.5, 0.5\right]$ for a given ${{\bar k}_i} \in \left[0,N_{sym}\right]$ in the Doppler domain. It means that the range of estimated Doppler is $\left(-0.5{\Delta f'}, 0.5{\Delta f'}\right]$ from the Doppler domain.} 

If the integral and fractional parts of normalized Doppler can be spliced, the unambiguous Doppler will break through the limitations of subcarrier spacing $\Delta _f$. Note that the values of $\alpha _i$ and ${b_i}$ cannot be directly added because they are the integral and fractional parts of two different normalized Doppler shifts $\nu _i$ and $\nu _i'$, respectively. According to the definitions of $\alpha _i$ and ${b_i}$, we can get 
\begin{equation}\label{eq:fd_esti}
	{{\hat f}_{d,i}} = \left( {\hat \alpha _i  + \hat a _i} \right)\Delta f = \left( {\hat \beta _i + \hat b _i} \right)\Delta f',
\end{equation}
and
\begin{align}\label{eq:a_1}
	\hat a _i = \frac{N}{{N + {N_{cp}}}}\hat \beta _i + \frac{N}{{N + {N_{cp}}}}\hat b _i- \hat \alpha _i.
\end{align}
We can see that $\hat a _i$ is a function of $\hat \beta _i$.
According to the condition ${a_i} \in \left( { - \frac{1}{2},\frac{1}{2}} \right]$, we can get $\hat \beta_i  \in \left( {\frac{{N + {N_{cp}}}}{N}\hat \alpha_i  - \hat b_i - 0.5\frac{{N + {N_{cp}}}}{N},\frac{{N + {N_{cp}}}}{N}\hat \alpha_i  - \hat b_i + 0.5\frac{{N + {N_{cp}}}}{N}} \right]$. The minimum and maximum values of $\hat \beta_i$ are ${{{\hat \beta }_{i,\min }} = \left\lceil {\frac{{N + {N_{cp}}}}{N}\hat \alpha_i  - \hat b_i - 0.5\frac{{N + {N_{cp}}}}{N}} \right\rceil }$ and ${{\hat \beta }_{i,\max }} = \left\lfloor {\frac{{N + {N_{cp}}}}{N}\hat \alpha_i  - \hat b_i + 0.5\frac{{N + {N_{cp}}}}{N}} \right\rfloor $, respectively. When $0 \le {N_{cp}} < N$, there are only two cases, either ${{\hat \beta }_{i,\max }} {=} {{\hat \beta }_{i,\min }}$ or ${{\hat \beta }_{i,\max }} {=} {{\hat \beta }_{i,\min }} {+} 1$. If ${{\hat \beta }_{i,\max }} {=} {{\hat \beta }_{i,\min }}$, let ${{\hat \beta_i }} {=} {{\hat \beta }_{i,\max }}$ and substitute it into (\ref{eq:a_1}) to compute $\hat a _i$. If ${{\hat \beta }_{i,\max }} {=} {{\hat \beta }_{i,\min }} {+} 1$, we introduce an early-late criterion to judge whether ${{\hat \beta_i }}$ should be equal to ${{\hat \beta }_{i,\max }}$ or ${{\hat \beta }_{i,\min }}$. The early-late judgment criterion is given by
\begin{equation}\label{eq:judgment}
	\hat \beta_i  = \left\{ {\begin{array}{*{20}{c}}
			{{{\hat \beta }_{i,\min }},}&\hspace{-2ex}\left|{\mathbf{Z}^{\hat l_i}_\mathrm{F}}\left[ {{{\bar p}_i} - 1,{{\bar k}_i}} \right]\right| > \left|{\mathbf{Z}^{\hat l_i}_\mathrm{F}}\left[ {{{\bar p}_i} + 1,{{\bar k}_i}} \right]\right|,\\
			{{{\hat \beta }_{i,\max }},}&{otherwise,}
	\end{array}} \right.
\end{equation}
where $\mathbf{Z}^{\hat l_i}_\mathrm{F}\left[ {{{\bar p}_i}{-}1,{{\bar k}_i}} \right]$ and $\mathbf{Z}^{\hat l_i}_\mathrm{F}\left[ {{{\bar p}_i}{+}1,{{\bar k}_i}} \right]$ are the early and late elements of the peak (i.e., $\mathbf{Z}^{\hat l_i}_\mathrm{F}\left[ {{{\bar p}_i},{{\bar k}_i}} \right]$) along the $p$-axis, respectively.  
{ Eq. (\ref{eq:judgment}) means that when the amplitude of the early element of the peak is larger than that of the late element, $\beta_i$ is estimated as $\hat \beta_i = {{\hat \beta }_{i,\min }}$. Otherwise, $\beta_i$ is estimated as $\hat \beta_i = {{\hat \beta }_{i,\max }}$.}
Finally, substituting the estimated $\hat \beta _i$ in (\ref{eq:judgment}) into (\ref{eq:fd_esti}), the Doppler shift of the $i$-th target, i.e., ${{\hat f}_{d,i}}$, can be estimated. The procedure to estimate the delay and the Doppler of targets is summarized as Algorithm \ref{algo:esti_delay_Doppler}.

\begin{algorithm}[htb]
	\caption{Estimating Delay and Doppler in the AFT-Doppler domain}\label{algo:esti_delay_Doppler}
	
	\textbf{Iuput}: $\mathbf{Y}$, $N$, $N_{cp}$, $N_{sym}$, $c_1$, $c_2$.
	
	\textbf{Output}: Estimated delay and Doppler $\hat \tau _i$ and ${{\hat f}_{d,i}}$.
	
	1: \ Compute $\mathbf{Y}^l_\mathrm{c}$ by (\ref{eq:compensation}), $l=0,\dots, N_{cp}-1$.
	
	2: \ Perform matched filter in the AFT domain and DFT to obtain $\mathbf{Z}^{l}_\mathrm{F}$ by (\ref{eq:in_out_rela_AFT_Z_3}).
	
	3: \ Extrate the indices of ${\bar l_i}$ ${\bar p_i}$ and ${\bar k_i}$ by searching peak. 
	
	4: \ Estimate $\hat l_i$, $\hat \tau _i$, ${\hat \alpha _i}$ and ${{\hat b}_i}$ by (\ref{eq:esti_delay}), (\ref{eq:int_Doppler}) and (\ref{eq:frac_Doppler}). 	
	
	5: \ Estimate ${{\hat f}_{d,i}}$ by (\ref{eq:fd_esti}), (\ref{eq:a_1}) and (\ref{eq:judgment}). 		
\end{algorithm}

{
\subsection{Computational Complexity Analysis}

In this subsection, we analyze the computational complexity of our proposed Algorithm \ref{algo:esti_delay_Doppler}. Following the concept of \cite{liu2018mu}, obtaining the matrix $\mathbf{Y}$ should perform $N_{sym}$ times $N$-point DAFT, which takes ${\mathcal {O}}\left(N_{sym}\left(N \log_2 N + N \right)\right)$ flops. Computing $\mathbf{Y}^l_\mathrm{c}, l=0,\dots, N_{cp}-1$ by (\ref{eq:compensation}) takes ${\mathcal {O}}\left(N_{cp}N_{sym}N \right)$ flops. Performing matched filter and DFT to obtain $\mathbf{Z}^{l}_\mathrm{F}$ by (\ref{eq:in_out_rela_AFT_Z_3}) takes ${\mathcal {O}}\left(N_{cp}N_{sym}N \log_2 N + N_{cp}NN_{sym}\log_2 N_{sym} \right)$ flops. Consequently, the computational complexity of our proposed Algorithm \ref{algo:esti_delay_Doppler} is ${\mathcal {O}}\left(N_{cp}NN_{sym}\log_2\left( NN_{sym} \right)\right)$ by ignoring the low-order terms. As a comparison, the complexity of the element-wise division method with low-complexity for OFDM-ISAC system is ${\mathcal {O}}\left(NN_{sym}\log_2\left( NN_{sym} \right)\right)$\cite{sturm2011waveform}. }

{ Finally, we briefly summarize how the AFDM chirp properties are leveraged in the proposed ISAC system. Thanks to the chirp properties, the delay and the integral part of normalized Doppler can be decoupled in the affine Fourier transform domain for the AFDM-ISAC system. We use this property to control the performance trade-offs of ISAC system, e.g., the maximum tolerable delay and the maximum tolerable Doppler. Moreover, we use this property to make the range of estimated Doppler breaking through the limitations of subcarrier spacing with an acceptable complexity.}

\section{Numerical Results}

In this section, we present the numerical results of ISAC systems.
This paper mainly focuses on comparing the performances of our AFDM-ISAC system and the existing OFDM-ISAC\cite{sturm2011waveform,zeng2020joint} and OTFS-ISAC\cite{gaudio2020effectiveness} systems. Unless otherwise specified, simulation parameters are listed in Table \ref{tab:parameter_simu}. { The subcarrier spacing of 48 kHz corresponds to a round-trip velocity of 300 m/s.}

\begin{table}
	\centering
	\caption{Simulations Paremeters}\label{tab:parameter_simu}
	\renewcommand\arraystretch{1.3}
	\begin{threeparttable}
		{
		\begin{tabular}{|c|l|c|}
			
			\hline
			\textbf{Symbol}                       &\textbf{Parameter}                       & \textbf{Value}     \\ \hline
			$f_c$ &  Carrier frequency                      & $24$ GHz     \\ \hline
			${B}$ &  Bandwidth                       & $122.88$ MHz  \\ \hline
			$\Delta _f$ &  Subcarrier spacing                       & $48$ kHz  \\ \hline
			$N$ &  Number of subcarriers                      & $2560$     \\ \hline
			$N_{cp}$ &  Number of chirp-periodic prefix (CPP)                      & $288$     \\ \hline
			$N_{sym}$ &  Number of AFDM symbols per frame                      & $32$     \\ \hline						
			$\nu _{\max}$ &  maximum normalized Doppler shift                       & 2  \\ \hline		
			${\bar \chi}_i$ &  Mean gain coefficient of sensing channel                       & 1  \\ \hline
		\end{tabular}
		}
	\end{threeparttable}
\end{table}

\subsection{Trade-offs Between S$\&$C Performances}

\begin{figure}[!htbp]
	\centering
	\subfigure[SSE vs. CSE]{
		\includegraphics[width=2.3in]{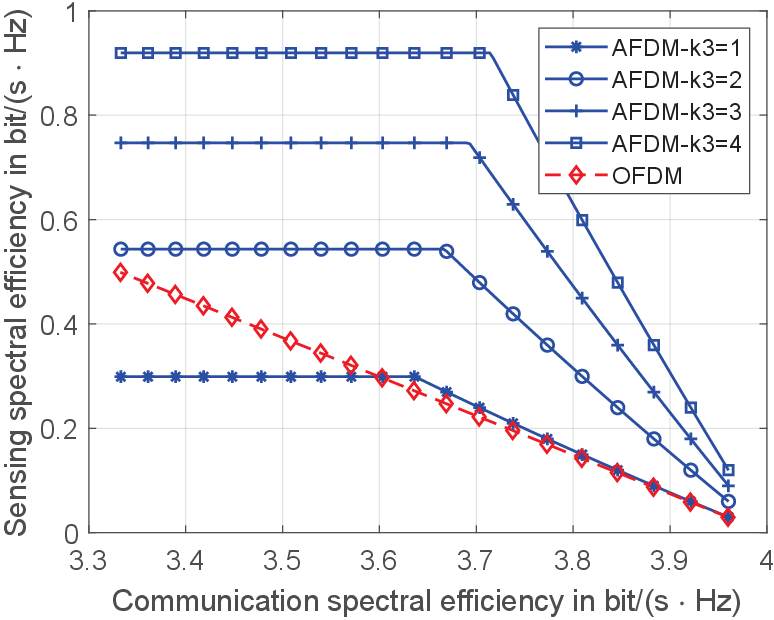}
	}
	\subfigure[Unambiguous range vs. unambiguous velocity]{
		\includegraphics[width=2.3in]{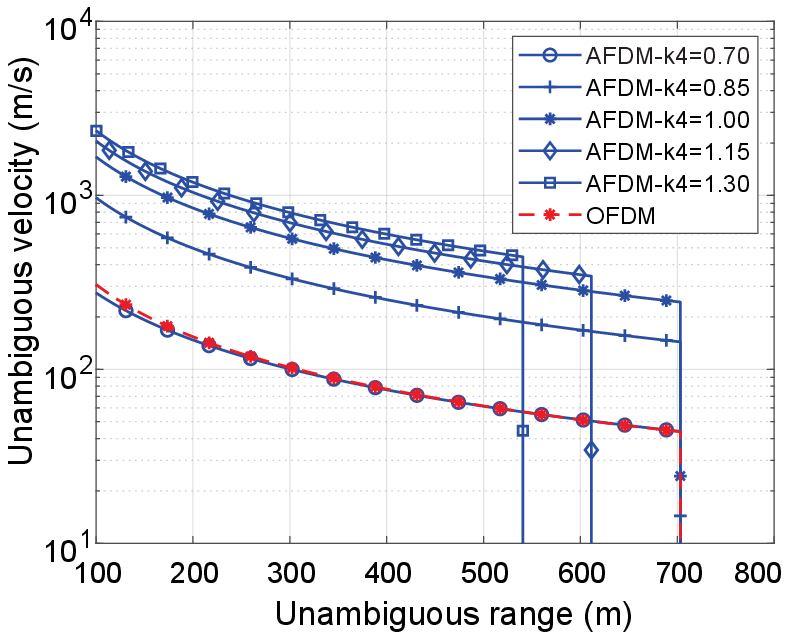}
	}
	\caption{Trade-off between (a) SSE and CSE, (b) unambiguous range and unambiguous velocity of sensing, for AFDM-ISAC and OFDM-ISAC systems with different $c_1$.  
		\label{fg:trade_off}}
\end{figure}

Firstly, we show the simulation results of the trade-offs between S$\&$C performances. The trade-offs between SSE and CSE of AFDM-ISAC and OFDM-ISAC systems with different $c_1$ and $N_{cp}$ are shown in Fig. \ref{fg:trade_off}(a), where $c_1=c_2=0$ for OFDM, and $c_1 = {{{\left(2\xi _v+1\right)}} \mathord{\left/
		{\vphantom {{{\left(2\xi _v+1\right)}} \left(2N\right)}} \right.
		\kern-\nulldelimiterspace} \left(2N\right)} + {{{k_3}} \mathord{\left/
		{\vphantom {{{k_3}} \left(2N\right)}} \right.
		\kern-\nulldelimiterspace} \left(2N\right)}$, $c_2=0$ and $\xi _v=4$ for AFDM, where $k_3=\left\{1,2,3,4\right\}$. $M_{mod} = 16$, $N=8192$, $\gamma = 0.5$ for both waveforms. The coverage radius of the ISAC base station $R_{max}$ is from 100 m to 2000 m, and $N_{cp} = \left\lceil\frac{2R_{max}B}{c}\right\rceil$. As $N_{cp}$ increases, CSE decreases from 3.96 bit/(s$\cdot$Hz) to 3.33 bit/(s$\cdot$Hz), and SSE increases from 0.12 bit/(s$\cdot$Hz) to 0.92 bit/(s$\cdot$Hz) for the OFDM-ISAC system. For the AFDM-ISAC system, the trade-off between SSE and CSE is strongly related to the parameter $c_1$. The larger the $c_1$, the higher the SSE. When $c_1 \ge {{{\left(2\xi _v+1\right)}} \mathord{\left/
		{\vphantom {{{\left(2\xi _v+1\right)}} \left(2N\right)}} \right.
		\kern-\nulldelimiterspace} \left(2N\right)} + {{{2}} \mathord{\left/
		{\vphantom {{{k_3}} \left(2N\right)}} \right.
		\kern-\nulldelimiterspace} \left(2N\right)}$, i.e., $k_3 \ge 2$, SSE of AFDM-ISAC system outperforms OFDM-ISAC system, given the same CSE. { Moreover, for a fixed $c_1$, SSE is a piecewise function of $N_{cp}$ according to Eq. (\ref{eq:eta_r_AFDM}), where SSE is proportional to $N_{cp}$ when $N_{cp} \le {{1} \mathord{\left/		{\vphantom {{1} \left(2c_1\right)}} \right. \kern-\nulldelimiterspace} \left(2c_1\right)}-1$, and then SSE keeps constant regardless of the increase of $N_{cp}$ when $N_{cp} > {{1} \mathord{\left/
			{\vphantom {{1} \left(2c_1\right)}} \right.
			\kern-\nulldelimiterspace} \left(2c_1\right)}-1$. CSE is inversely proportional to $N_{cp}$, which always holds on. As a result, as $N_{cp}$ increases, 
	SSE firstly increases with the decrease of CSE when $N_{cp} < {{1} \mathord{\left/
			{\vphantom {{1} \left(2c_1\right)}} \right.
			\kern-\nulldelimiterspace} \left(2c_1\right)}-1$ and then keeps constant with the decrease of CSE when $N_{cp} > {{1} \mathord{\left/
		{\vphantom {{1} \left(2c_1\right)}} \right.
		\kern-\nulldelimiterspace} \left(2c_1\right)}-1$ as shown in Fig. \ref{fg:trade_off}(a). The turning occurs at $N_{cp} = {{1} \mathord{\left/
		{\vphantom {{1} \left(2c_1\right)}} \right.
		\kern-\nulldelimiterspace} \left(2c_1\right)}-1$, and the coordinate of turning point in Fig. \ref{fg:trade_off}(a) is $\left( {\frac{{2\left( {N - 1} \right){c_1}{{\log }_2}{M_{mod }}}}{{2\left( {N - 1} \right){c_1} + 1}}, \frac{{1 - 2{c_1}}}{{{c_1}}}\left( {{c_1} - \frac{{2{\xi _v} + 1}}{{2N}}} \right)I_{sen}} \right)$.}

Figure \ref{fg:trade_off}(b) reveals the trade-off between unambiguous range and unambiguous velocity of AFDM-ISAC and OFDM-ISAC systems with different $c_1$ and $N$.
Two systems have the same waveform parameters following 5G NR ($N_{cp}=\frac{288}{4096}N$\cite{Techplayon20205G}), so that they have the same CSE. $\xi _v=4$, and $N$ is from 1024 to 8192 for both systems.
Benefitting from the flexibility of AFDM in parameter $c_1$, 
we set $c_1 = {{{k_4}} \mathord{\left/
		{\vphantom {{{k_4}} \left(2N_{cp}+2\right)}} \right.
		\kern-\nulldelimiterspace} \left(2N_{cp}+2\right)}$ and $k_4=\left\{0.7,0.85,1.0,1.15,1.3\right\}$. 
	We can observe that the unambiguous velocity decreases with the growth of unambiguous range for both systems, which is coincident with analytical relation in Eq. (\ref{eq:fd_m_AFDM_2}). For the AFDM-ISAC system, the trade-off between unambiguous range and unambiguous velocity is also strongly related to the parameter $c_1$. When $c_1={{{0.7}} \mathord{\left/
			{\vphantom {{{0.7}} \left(2N_{cp}+2\right)}} \right.
			\kern-\nulldelimiterspace} \left(2N_{cp}+2\right)}$, i.e., $k_4=0.7$, two systems have the same performances. With the growth of $c_1$ from ${{{0.7}} \mathord{\left/
			{\vphantom {{{0.7}} \left(2N_{cp}+2\right)}} \right.
			\kern-\nulldelimiterspace} \left(2N_{cp}+2\right)}$ to ${{{1.0}} \mathord{\left/
			{\vphantom {{{1.0}} \left(2N_{cp}+2\right)}} \right.
			\kern-\nulldelimiterspace} \left(2N_{cp}+2\right)}$, the unambiguous velocity is increased to five times that of OFDM-ISAC system, while maintaining the same CSE and unambiguous range.  
		{ Moreover, the unambiguous range is also a piecewise function of $N_{cp}$ and $c_1$ according to Eq. (\ref{eq:t_m_AFDM}). For a given $c_1$, the unambiguous range is proportional to $N_{cp}$ when $c_1 \le {{{1}} \mathord{\left/
					{\vphantom {{{1}} \left(2N_{cp}+2\right)}} \right.
					\kern-\nulldelimiterspace} \left(2N_{cp}+2\right)}$ and will be bounded on ${\frac{c}{{2B}}\left( {\frac{1}{{2{c_1}}} - 1} \right)}$ when $c_1 > {{{1}} \mathord{\left/
					{\vphantom {{{1}} \left(2N_{cp}+2\right)}} \right.
					\kern-\nulldelimiterspace} \left(2N_{cp}+2\right)}$. Therefore, when $c_1 > {{{1}} \mathord{\left/
					{\vphantom {{{1}} \left(2N_{cp}+2\right)}} \right.
					\kern-\nulldelimiterspace} \left(2N_{cp}+2\right)}$, as $c_1$ increases, the upper bound on the unambiguous range will lost.}

\subsection{Simulation Results of Sensing Performances}

\begin{figure}[!htbp]
	\vspace*{-5pt} 
	\centering
	\includegraphics[width=2.2in]{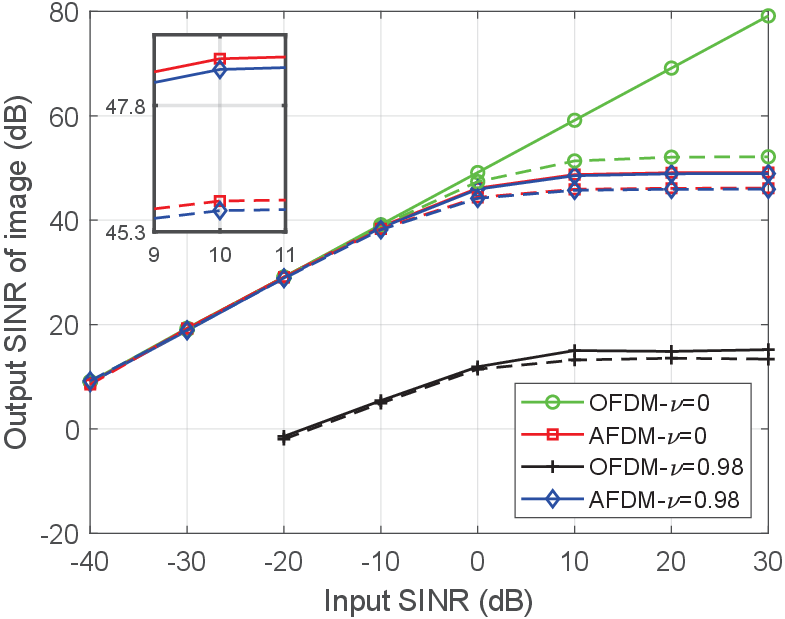}
	\caption{ Output SINR of image versus the input SINR of echo in both Swerling 0 and Swerling 3 models}
	\label{fg:Image_SNR_vs_inputSNR}
	\vspace*{-5pt} 
\end{figure}

{ Figures \ref{fg:Image_SNR_vs_inputSNR} and \ref{fg:Image_SNR_vs_Doppler} show the sensing performances in both Swerling 0 and Swerling 3 models for AFDM-ISAC and OFDM-ISAC systems.}
In the simulation, AFDM and OFDM waveforms are set to the same waveform parameters following Table \ref{tab:parameter_simu} (except for $c_1$). $c_1=0.0025$ and $c_2=0$ for AFDM.
Figure \ref{fg:Image_SNR_vs_inputSNR} illustrates the simulation results of image signal-to-noise ratio (SNR), which is the ratio between the peak caused by the target and the average noise level in the two-dimensional radar image\cite{sturm2011waveform}, versus the SINR of received echoc. The average noise level of the radar image is simultaneously contributed by the input noise and the side lobes of the ambiguity function of the waveform. 
{ The solid and dashed lines represent the cases of the Swerling 0 and the Swerling 3 models, respectively. The Doppler shifts are set as $f_d{=}0$ and $f_d{=}0.98\Delta_f$, corresponding to the round-trip velocity of 0 m/s and 294 m/s, respectively.}

{ On the one hand, under the Swerling 0 model}, it can be observed that in the case of $f_d{=}0$, the available image SNRs decrease almost linearly with ${\rm SINR}$ for both systems when ${\rm SINR}$ is below 0 dB.
There appears to be a saturation of image SNR starting approximately at the ${\rm SINR}$ of 10 dB for our proposed AFDM-ISAC system. The OFDM-ISAC system can output higher image SNR at high ${\rm SINR}$ regions. This is because the OFDM-ISAC system can obtain lower side lobes by eliminating the effects of random communications symbols utilizing symbol division.
In the case of $f_d{=}0.98\Delta_f$, the saturations of image SNRs for both systems appear, and the image SNR of our proposed AFDM-ISAC system almost the same as that in the case of $f_d{=}0$. However, the image SNR severely decreases for the OFDM-ISAC system.
 
{ On the other hand, under the Swerling 3 model, the image SNR of OFDM-ISAC system in the cases of $f_d{=}0$ appears saturation in the high input SNR region. The reason is that the fluctuating gain efficient $\chi$ makes the received symbols in the frequency domain distorted, and thus the symbol division method in \cite{sturm2011waveform} cannot completely eliminate the effects of random communications bits, which leads to the increased side lobes. For our proposed AFDM-ISAC system, there is a slight loss of image SINR for both cases of $f_d{=}0$ and $f_d{=}0.98\Delta_f$ compared with that of the Swerling 0 model, which means that the AFDM-ISAC system can also work in the Swerling model 3.}

\begin{figure}[!htbp]
	\vspace*{-5pt} 
	\centering
	\includegraphics[width=2.2in]{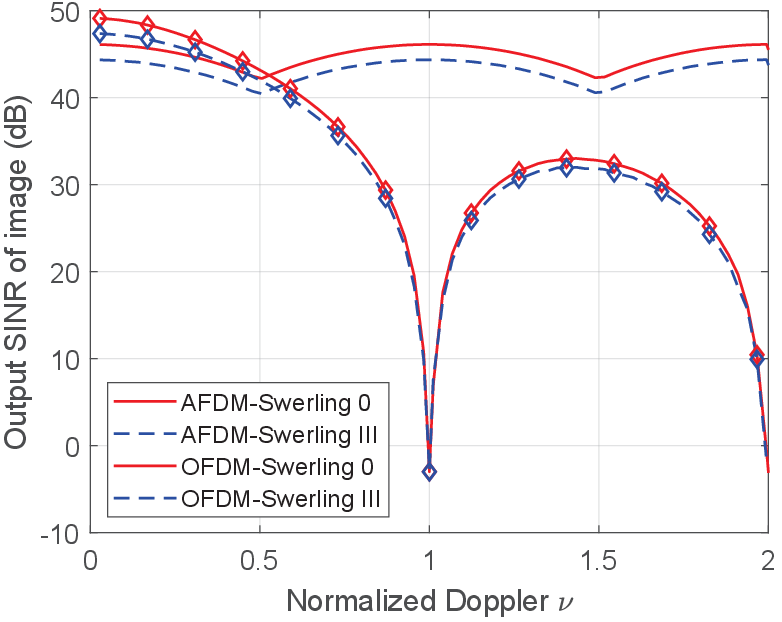}
	\caption{ Output SINR of image versus the normalized Doppler $\nu$ with input echo SNR of 0 dB in both Swerling 0 and Swerling 3 models.}
	\label{fg:Image_SNR_vs_Doppler}
	\vspace*{-5pt} 
\end{figure}

Then, we compare the image SNR versus normalized Doppler shift $\nu$ { under the Swerling 0 and Swerling 3 models}, as shown in Fig. \ref{fg:Image_SNR_vs_Doppler}. { The maximum normalized Doppler shift $\nu _{\max}$ is set to be 2, corresponding to a round-trip velocity of 600 m/s.} Under the Swerling 0 model, it can be seen that in the low Doppler shift region ($\nu < 0.5 $), both systems output similar image SNRs. As Doppler shift increases, the image SNR of the OFDM-ISAC system decreases severely first and then increases. Its minimum SNRs ($<$0 dB) are achieved at the integral $\nu$. However, our AFDM-ISAC system can always output an image SNR greater than 40 dB, which coincides with results in Fig. \ref{fg:Image_SNR_vs_inputSNR}. { Under the Swerling 3 model, both systems have a slight loss of image SINR compared with that of the Swerling 0 model.}

\begin{figure}[!htbp]
	\vspace*{-5pt} 
	\centering
	\includegraphics[width=2.2in]{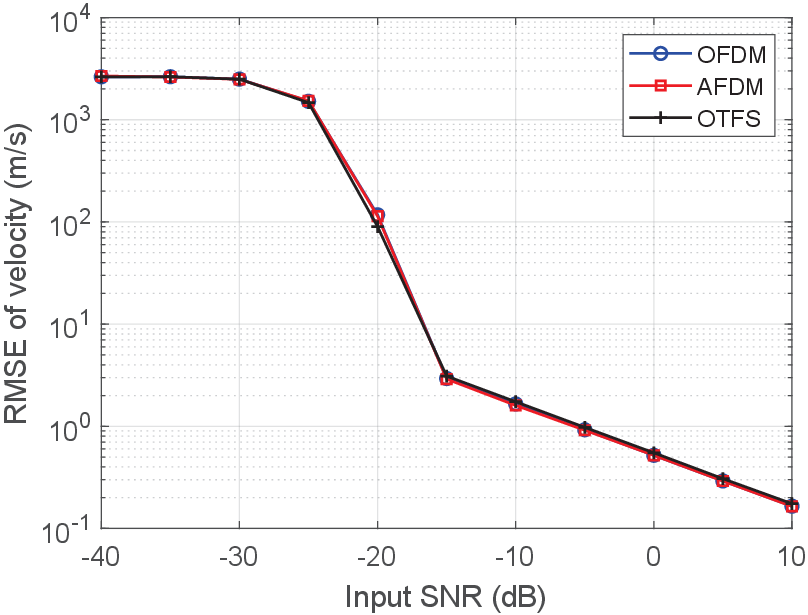}
	\caption{ RMSEs of velocity estimation versus SNR for AFDM-ISAC, OFDM-ISAC and OTFS-ISAC waveforms.
				\label{fg:RMSE_velocity}}
	\vspace*{-5pt} 
\end{figure}
{ Finally, we compare the performances of velocity estimation for AFDM-ISAC, OFDM-ISAC and OTFS-ISAC waveforms. The root mean squared errors (RMSEs) of velocity estimation for three waveforms are shown in Fig. \ref{fg:RMSE_velocity}. The waveform parameters are set following \cite{gaudio2020effectiveness}, where for OFDM and OTFS, $f_c = 5.89$ GHz, $B = 10$ MHz, $\Delta _f = 156.25$ kHz, the number of subcarriers $N_c = 64$, the number of symbols per ISAC frame $N_{sym} = 50$. To make a fair comparison with OFDM and OTFS, we set the number of subcarriers $N_c = 64 \times 50 = 2560$ and the number of symbols per ISAC frame $N_{sym} = 1$ for AFDM such that AFDM has the same resources of bandwidth and time interval as OFDM and OTFS. The maximum likelihood (ML) estimator is used to estimate the velocity for all waveforms\cite{gaudio2020effectiveness}. We can see that AFDM can provide as accurate performance of velocity estimation as OFDM and OTFS. }

\section{Conclusion}

This paper proposed an AFDM-ISAC system. We introduced two new metrics, i.e., SSE and SOP, and designed a metric set for the AFDM-ISAC system.
After that, the analytical relationship between metrics and AFDM waveform parameters was derived, and the trade-offs between S$\&$C performances were analytically investigated for the AFDM-ISAC system. Finally, an estimation method for our AFDM-ISAC system was proposed to estimate delay and Doppler in the AFT-Doppler domain. Numerical results verified that our proposed AFDM-ISAC system significantly enlarged maximum tolerable delay/Doppler and possessed good spectral efficiency and PSLR in high-mobility scenarios.

\section*{APPENDIX}

\appendices
\section{Proof of Lemma 1}\label{proof_Lemma1}
\vspace*{-10pt} 
\small
\begin{align} \label{eq:MI_est_delay}
&{I\left(\tau ^{n,k};{\hat \tau }^{n,k}| U^{n,k},{\hat U}^{n,k}\right)}  \nonumber \\
& = H\left( \tau ^{n,k} | U^{n,k},{\hat U}^{n,k} \right) - H\left( \tau ^{n,k} |{\hat \tau }^{n,k}, U^{n,k},{\hat U}^{n,k} \right),
\end{align} \normalsize
where $H\left( \tau ^{n,k} | U^{n,k},{\hat U}^{n,k} \right)$ and $H\left( \tau ^{n,k} |{\hat \tau }^{n,k}, U^{n,k},{\hat U}^{n,k} \right)$ are given by (\ref{eq:proof_info_est_delay}) and (\ref{eq:proof_info_est_delay2}) on the next page, respectively. Both equality (a) in (\ref{eq:proof_info_est_delay}) and equality (b) in (\ref{eq:proof_info_est_delay2}) come from $f\left( {{\tau ^{n,k}}|{U^{n,k}},{{\hat U}^{n,k}}} \right) = f\left( {{\tau ^{n,k}}|{U^{n,k}}} \right)$, since the distribution of ${\tau ^{n,k}}$ is not affected by ${{\hat U}^{n,k}}$, after ${U^{n,k}}$ is given. 

\begin{figure*}[htbp]
	\vspace*{-10pt} 
\small
	\begin{flalign}\label{eq:proof_info_est_delay}
		&\ H\left( {{\tau ^{n,k}}|{U^{n,k}},{{\hat U}^{n,k}}} \right) =  - \sum\limits_{{U^{n,k}}} {\sum\limits_{{{\hat U}^{n,k}}} {\int_{ - \infty }^\infty  {f\left({U^{n,k}}, {{\tau ^{n,k}},{{\hat U}^{n,k}}} \right)\log f\left( {{\tau ^{n,k}}|{U^{n,k}},{{\hat U}^{n,k}}} \right)d{\tau ^{n,k}}} } } & \nonumber\\
		&\ \quad\quad\quad\quad\quad\quad\quad\quad\quad \stackrel{a}=  - \sum\limits_{{U^{n,k}}} {\sum\limits_{{{\hat U}^{n,k}}} {\int_{ - \infty }^\infty  {{\Pr}\left( {{U^{n,k}}} \right){\Pr}\left( {{{\hat U}^{n,k}}|{U^{n,k}}} \right)f\left( {{\tau ^{n,k}}|{U^{n,k}}} \right)\log f\left( {{\tau ^{n,k}}|{U^{n,k}}} \right)d{\tau ^{n,k}}} } } & \nonumber \\
		&\ \quad\quad\quad\quad\quad\quad\quad\quad\quad = \gamma \log {\Delta _\tau } - \left( {1 - \gamma } \right)\log \delta \left( 0 \right). & 
\end{flalign}\normalsize
\vspace*{-10pt} 
\end{figure*}
\begin{figure*}[htbp]
	\vspace*{-10pt} 
	\small
		\begin{align}\label{eq:proof_info_est_delay2}
			&H\left( {{\tau ^{n,k}}|{{\hat \tau }^{n,k}},{U^{n,k}},{{\hat U}^{n,k}}} \right) =  - \sum\limits_{{U^{n,k}}} {\sum\limits_{{{\hat U}^{n,k}}} {\sum\limits_{{{\hat \tau}^{n,k}}}  {\int_{ - \infty }^\infty  {f\left( {{\tau ^{n,k}},{{\hat \tau }^{n,k}},{U^{n,k}},{{\hat U}^{n,k}}} \right)\log f\left( {{\tau ^{n,k}}|{{\hat \tau }^{n,k}},{U^{n,k}},{{\hat U}^{n,k}}} \right)d{\tau ^{n,k}}} } } } \nonumber \\
			&\stackrel{b}=  - \sum\limits_{{U^{n,k}}} {\sum\limits_{{{\hat U}^{n,k}}} {\sum\limits_{{{\hat \tau}^{n,k}}}  {\int_{ - \infty }^\infty  {{\Pr}\left( {{U^{n,k}}} \right){\Pr}\left( {{{\hat U}^{n,k}}|{U^{n,k}}} \right)f\left( {{\tau ^{n,k}}|{U^{n,k}}} \right){\Pr}\left( {{{\hat \tau }^{n,k}}|{\tau ^{n,k}},{U^{n,k}},{{\hat U}^{n,k}}} \right)\log f\left( {{\tau ^{n,k}}|{{\hat \tau }^{n,k}},{U^{n,k}},{{\hat U}^{n,k}}} \right)d{\tau ^{n,k}}} } } } .
				\end{align}\normalsize
\hrule
\vspace*{-10pt} 
\end{figure*}

When the detector outputs target being absent in the $\left(n,k\right)$-th cell, the delay is estimated as null. Hence, we have
\small
\begin{align}\label{eq:pdf_tau_est}
&\	{\Pr}\left( {{{\hat \tau }^{n,k}}|{\tau ^{n,k}},{U^{n,k}} = 1,{{\hat U}^{n,k}} = 0} \right) \hspace*{-1px}=\hspace*{-1px} \left\{ {\begin{array}{*{20}{c}}
			\hspace*{-5px}{1,}&\hspace*{-7px} {{{\hat \tau }^{n,k}} \in {\phi} ,}\\
			\hspace*{-5px}{0,}&\hspace*{-7px}{otherwise,}
	\end{array}} \right.&
\end{align}\normalsize
\small
	\begin{align}\label{eq:pdf_tau_est2}		
		&\	{\Pr}\left( {{{\hat \tau }^{n,k}}|{\tau ^{n,k}},{U^{n,k}} = 0,{{\hat U}^{n,k}} = 0} \right) \hspace*{-1px}=\hspace*{-1px} \left\{ {\begin{array}{*{20}{c}}
				\hspace*{-5px}{1,}&\hspace*{-7px}{{{\hat \tau }^{n,k}} \in \phi ,}\\
				\hspace*{-5px}{0,}&\hspace*{-7px}{otherwise,}
		\end{array}} \right.&
\end{align}\normalsize
where ${\phi}$ denotes the empty set.
When $U^{n,k}=1$, ${{\hat U}^{n,k}}=1$, and the correct sub-cell has been determined, the probability of ${\Pr}\left( {{{\hat \tau }^{n,k}}|{\tau ^{n,k}},{U^{n,k}} = 1,{{\hat U}^{n,k}} = 1} \right)$ is given by
\begin{equation}
	{\Pr}\left( {{{\hat \tau }^{n,k}}|{\tau ^{n,k}},{U^{n,k}}\hspace*{-1px} =\hspace*{-1px} 1,{{\hat U}^{n,k}} \hspace*{-1px}=\hspace*{-1px} 1} \right) \hspace*{-1px}=\hspace*{-1px} \left\{ {\begin{array}{*{20}{c}}
			\hspace*{-5px}{1,}&\hspace*{-7px}{{{\hat \tau }^{n,k}} \in \mathbb{T}_1,}\\
			\hspace*{-5px}{0,}&\hspace*{-7px}{otherwise,}
	\end{array}} \right.\\
\end{equation}
where $\mathbb{T}_1 = \left[ {{\tau ^{n,k}} - 0.5{\bar D_\tau },{\tau ^{n,k}} + 0.5{\bar D_\tau }} \right)$.
When $U^{n,k}=0$ and ${{\hat U}^{n,k}}=1$, the peak above the threshold is contributed by the noise, and thus the delay corresponding to the peak value is random. As a result, the conditional probability of ${{\hat \tau }^{n,k}}$ is written as
\begin{equation}
	{\Pr}\left( {{{\hat \tau }^{n,k}}|{\tau ^{n,k}},{U^{n,k}} = 0,{{\hat U}^{n,k}} = 1} \right) = \left\{ {\begin{array}{*{20}{c}}
			{\frac{\bar D_{\tau}}{{{\Delta _{\tau} }}},}&{{{\hat \tau }^{n,k}} \in \mathbb{T}_2,}\\
			{0,}&{otherwise,}
	\end{array}} \right.
\end{equation}
where $\mathbb{T}_2 = \left[ {{{\bar \tau }^{n,k}} - 0.5{\Delta _\tau },{{\bar \tau }^{n,k}} + 0.5{\Delta _\tau }} \right)$.
Since in generic $f\left( {x|y,z} \right) = \frac{{f\left( {x|z} \right)f\left( {y|x,z} \right)}}{{f\left( {y|z} \right)}}$, we can get
$V_{\max}$
\begin{align}\label{eq:pdf_tau}
	&f\left( {{\tau ^{n,k}}|{{\hat \tau }^{n,k}},{U^{n,k}} = 1,{{\hat U}^{n,k}} = 1} \right) = \left\{ {\begin{array}{*{20}{c}}
			{\frac{1}{{\bar D_{\tau}}},}&{{\tau ^{n,k}} \in \mathbb{T}_3,} \\
			{0,}&{otherwise,}
	\end{array}} \right. \nonumber\\
	&	f\left( {{\tau ^{n,k}}|{{\hat \tau }^{n,k}},{U^{n,k}} = 0,{{\hat U}^{n,k}} = 1} \right) = \delta \left( {{\tau ^{n,k}}} \right),\nonumber\\
	&	f\left( {{\tau ^{n,k}}|{{\hat \tau }^{n,k}},{U^{n,k}} = 1,{{\hat U}^{n,k}} = 0} \right) = \left\{ {\begin{array}{*{20}{c}}
			{\frac{1}{{{\Delta _\tau }}},}&{{\tau ^{n,k}} \in \mathbb{T}_2 ,}\\
			{0,}&{otherwise,}
	\end{array}} \right.\nonumber\\
	&	f\left( {{\tau ^{n,k}}|{{\hat \tau }^{n,k}},{U^{n,k}} = 0,{{\hat U}^{n,k}} = 0} \right) = \delta \left( {{\tau ^{n,k}}} \right),
\end{align}\normalsize
where $\mathbb{T}_3 = \left[ {{{\hat \tau }^{n,k}} - 0.5{\bar D_\tau },{{\hat \tau }^{n,k}} + 0.5{\bar D_\tau }} \right)$.
Substituting (\ref{eq:pdf_tau_est}) $\sim$ (\ref{eq:pdf_tau}) to (\ref{eq:proof_info_est_delay2}), we can get
\begin{align} \label{eq:re_info_est_delay}
	& H\left( {{\tau ^{n,k}}|{{\hat \tau }^{n,k}},{U^{n,k}},{{\hat U}^{n,k}}} \right) = \gamma {P_{D}^{n,k}}\log {\bar D_\tau } + \gamma \log {\Delta _\tau } \nonumber \\
	&\qquad \qquad \qquad - \gamma {P_{D}^{n,k}}\log {\Delta _\tau } - \left( {1 - \gamma } \right)\log \delta \left( 0 \right).
\end{align}
Substituting (\ref{eq:proof_info_est_delay}) and (\ref{eq:re_info_est_delay}) to (\ref{eq:MI_est_delay}), ${I\left(\tau ^{n,k};{\hat \tau }^{n,k}| U^{n,k},{\hat U}^{n,k}\right)}$ can be written as (\ref{eq:info_est_delay}). Lemma 1 is proved.

\section{Proof of Corollary 1}\label{proof_Theorem1} 

Let $x_1 = {\frac{{\tau ^{U} - {\tau ^{n,k}}}}{{{\sigma _{\tau }^{n,k}}}}}$, and $x_2 = {\frac{{\tau ^{L} - {\tau ^{n,k}}}}{{{\sigma _{\tau }^{n,k}}}}}$.
$x_1 - x_2 = {\frac{\bar D_{\tau}}{{{\sigma _{\tau }^{n,k}}}}}\left(1+\left\lceil {{{\left( {{\tau ^{n,k}} - {{\bar \tau }^{n,k}}} \right)} \mathord{\left/
			{\vphantom {{\left( {{\tau ^{n,k}} - {{\bar \tau }^{n,k}}} \right)} {{\bar D_\tau }}}} \right.
			\kern-\nulldelimiterspace} {{\bar D_\tau }}}} \right\rceil - \left\lfloor {{{\left( {{\tau ^{n,k}} - {{\bar \tau }^{n,k}}} \right)} \mathord{\left/
			{\vphantom {{\left( {{\tau ^{n,k}} - {{\bar \tau }^{n,k}}} \right)} {{\bar D_\tau }}}} \right.
			\kern-\nulldelimiterspace} {{\bar D_\tau }}}} \right\rfloor\right)$.
When ${\tau ^{n,k}} = \left\lfloor {{{\left( {{\tau ^{n,k}} - {{\bar \tau }^{n,k}}} \right)} \mathord{\left/
			{\vphantom {{\left( {{\tau ^{n,k}} - {{\bar \tau }^{n,k}}} \right)} {{\bar D_\tau }}}} \right.
			\kern-\nulldelimiterspace} {{\bar D_\tau }}}} \right\rfloor {\bar D_\tau } + {{\bar \tau }^{n,k}}$, we can get ${\tau ^{L}}-{\tau ^{n,k}}=-0.5\bar D_{\tau}$ and ${\tau ^{U}}-{\tau ^{n,k}}=0.5\bar D_{\tau}$. At this time, $x_1 - x_2$ reaches its minimum value ${\frac{\bar D_{\tau}}{{{\sigma _{\tau }^{n,k}}}}}$. Consequently, $Q\left( {\frac{{\tau ^{L} - {\tau ^{n,k}}}}{{{\sigma _{\tau }^{n,k}}}}} \right) - Q\left( {\frac{{\tau ^{U} - {\tau ^{n,k}}}}{{{\sigma _{\tau }^{n,k}}}}} \right)$ has its minimum value $1-2Q\left( {\frac{\bar D_{\tau}}{{2{\sigma _{\tau }^{n,k}}}}} \right)$. Similarly, when ${f_{d}^{n,k}} = \left\lfloor {{{\left( {{f_{d}^{n,k}} - {{\bar f}_{d}^{n,k}}} \right)} \mathord{\left/
			{\vphantom {{\left( {{f_{d}^{n,k}} - {{\bar f}_{d}^{n,k}}} \right)} {{\bar D_{{f_d}}}}}} \right.
			\kern-\nulldelimiterspace} {{\bar D_{{f_d}}}}}} \right\rfloor {\bar D_{{f_d}}} + {{\bar f}_{d}^{n,k}}$, $Q\left( {\frac{{f_{d}^L - {f_{d}^{n,k}}}}{{{\sigma _{{f_d}}^{n,k}}}}} \right) - Q\left( {\frac{{f_{d}^{U} - {f_{d}^{n,k}}}}{{{\sigma _{{f_d}}^{n,k}}}}} \right)$ has its minimum value $1-2Q\left( {\frac{\bar D_{f_d}}{{2{\sigma _{f_{d}}^{n,k}}}}} \right)$. Therefore, SOP has its maximum value $P_{e,sen}^{\max }$ as shown in (\ref{eq:SOP_max}).
		
When ${\tau ^{n,k}} \neq \left\lfloor {{{\left( {{\tau ^{n,k}} - {{\bar \tau }^{n,k}}} \right)} \mathord{\left/
			{\vphantom {{\left( {{\tau ^{n,k}} - {{\bar \tau }^{n,k}}} \right)} {{\bar D_\tau }}}} \right.
			\kern-\nulldelimiterspace} {{\bar D_\tau }}}} \right\rfloor {\bar D_\tau } + {{\bar \tau }^{n,k}}$, we can get $x_1 - x_2 = {\frac{2D_{\tau}}{{{\sigma _{\tau }^{n,k}}}}}$ holding for any ${\tau ^{n,k}}$. At this time, $Q\left( {\frac{{\tau ^{L} - {\tau ^{n,k}}}}{{{\sigma _{\tau }^{n,k}}}}} \right) - Q\left( {\frac{{\tau ^{U} - {\tau ^{n,k}}}}{{{\sigma _{\tau }^{n,k}}}}} \right)$ can achieve its maximum value $1-2Q\left( {\frac{\bar D_{\tau}}{{{\sigma _{\tau }^{n,k}}}}} \right)$, when ${\tau ^{n,k}} = \left\lceil {{{\left( {{\tau ^{n,k}} - {{\bar \tau }^{n,k}}} \right)} \mathord{\left/
			{\vphantom {{\left( {{\tau ^{n,k}} - {{\bar \tau }^{n,k}}} \right)} {{\bar D_\tau }}}} \right.
			\kern-\nulldelimiterspace} {{\bar D_\tau }}}} \right\rceil {\bar D_\tau } + {{\bar \tau }^{n,k}} - 0.5{\bar D_\tau }$.	Similarly, when ${f_{d}^{n,k}} = \left\lceil {{{\left( {{f_{d}^{n,k}} - {{\bar f}_{d}^{n,k}}} \right)} \mathord{\left/
			{\vphantom {{\left( {{f_{d}^{n,k}} - {{\bar f}_{d}^{n,k}}} \right)} {{\bar D_{{f_d}}}}}} \right.
			\kern-\nulldelimiterspace} {{\bar D_{{f_d}}}}}} \right\rceil {\bar D_{{f_d}}} + {{\bar f}_{d}^{n,k}} - 0.5{\bar D_{{f_d}}}$, $Q\left( {\frac{{f_{d}^L - {f_{d}^{n,k}}}}{{{\sigma _{{f_d}}^{n,k}}}}} \right) - Q\left( {\frac{{f_{d}^{U} - {f_{d}^{n,k}}}}{{{\sigma _{{f_d}}^{n,k}}}}} \right)$ has its maximum value $1-2Q\left( {\frac{\bar D_{f_d}}{{{\sigma _{f_{d}}^{n,k}}}}} \right)$. Therefore, SOP has its minimum value $P_{e,sen}^{\min }$ as shown in (\ref{eq:SOP_min}). Corollary 1 is proved.

\small
\bibliographystyle{IEEEbib}
\bibliography{IEEEabrv,IEEE_JRCJ_ref}

\begin{thebibliography}{10}

\bibitem{ni2022afdm}
Y.~Ni, Z.~Wang, P.~Yuan, and Q.~Huang,
\newblock ``An {AFDM}-based integrated sensing and communications,''
\newblock in {\em Proc. 2022 Int. Symposium on Wireless Commun. Syst. (ISWCS)}.
  IEEE, Oct. 2022, pp. 1--6.

\bibitem{liu2020joint}
F.~Liu, C.~Masouros, A.~Petropulu, H.~Griffiths, and L.~Hanzo,
\newblock ``Joint radar and communication design: {A}pplications,
  state-of-the-art, and the road ahead,''
\newblock {\em IEEE Trans. Commun.}, vol. 68, no. 6, pp. 3834--3862, Jun. 2020.

\bibitem{sturm2011waveform}
C.~Sturm and W.~Wiesbeck,
\newblock ``Waveform design and signal processing aspects for fusion of
  wireless communications and radar sensing,''
\newblock {\em Proc. IEEE}, vol. 99, no. 7, pp. 1236--1259, Jul. 2011.

\bibitem{zeng2020joint}
Y.~Zeng, Y.~Ma, and S.~Sun,
\newblock ``Joint radar-communication with cyclic prefixed single carrier
  waveforms,''
\newblock {\em IEEE Trans. Veh. Technol.}, vol. 69, no. 4, pp. 4069--4079, Apr.
  2020.

\bibitem{Raviteja2018Interference}
P.~Raviteja, K.~T. Phan, Y.~Hong, and E.~Viterbo,
\newblock ``Interference cancellation and iterative detection for orthogonal
  time frequency space modulation,''
\newblock {\em IEEE Trans. Wireless Commun.}, vol. 17, no. 10, pp. 6501--6515,
  Oct. 2018.

\bibitem{bemani2023affine}
A.~Bemani, N.~Ksairi, and M.~Kountouris,
\newblock ``Affine frequency division multiplexing for next generation wireless
  communications,''
\newblock {\em IEEE Trans. Wireless Commun.}, vol. 22, no. 11, pp. 8214 --
  8229, Nov. 2023.

\bibitem{Gaudio2020On}
L.~Gaudio, M.~Kobayashi, G.~Caire, , and G.~Colavolpe,
\newblock ``{On the effectiveness of OTFS for joint radar parameter estimation
  and communication},''
\newblock {\em IEEE Trans. Wireless Commun.}, vol. 19, no. 9, pp. 5951--5965,
  Sep. 2020.

\bibitem{hadani2017orthogonal}
R.~Hadani et~al.,
\newblock ``{Orthogonal time frequency space (OTFS) modulation for
  millimeter-wave communications systems},''
\newblock in {\em Proc. IEEE MTT-S Int. Microw. Symp.} IEEE, Jun. 2017, pp.
  681--683.

\bibitem{lin2022orthogonal}
H.~Lin and J.~Yuan,
\newblock ``Orthogonal delay-doppler division multiplexing modulation,''
\newblock {\em IEEE Trans. Wireless Commun.}, vol. 21, no. 12, pp.
  11024--11037, Dec. 2022.

\bibitem{wang2024exploring}
D.~Wang, C.~Huang, L.~Liu, X.~Chen, W.~Wang, Z.~Zhang, C.~Yuen, and M.~Debbah,
\newblock ``Exploring channel estimation and signal detection for {ODDM}-based
  {ISAC} systems,''
\newblock {\em IEEE Wireless Commun. Lett.}, vol. 13, no. 8, pp. 2270 -- 2274,
  Aug. 2024.

\bibitem{zhang2024multiuser}
H.~Zhang, S.~Chen, W.~Meng, J.~Yuan, and C.~Li,
\newblock ``Multiuser association and localization over doubly dispersive
  multipath channels for integrated sensing and communications,''
\newblock {\em IEEE J. Sel. Areas Commun.}, vol. 42, no. 10, pp. 2847 -- 2862,
  Oct. 2024.

\bibitem{Ouyang2016Orthogonal}
X.~Ouyang and J.~Zhao,
\newblock ``{Orthogonal chirp division multiplexing},''
\newblock {\em IEEE Trans. Commun.}, vol. 64, no. 9, pp. 3946--3957, Sep. 2016.

\bibitem{Oliveira2020An}
{L. G. de Oliveira and M. B. Alabd and B. Nuss and T. Zwick},
\newblock ``{An OCDM radarcommunication systemn},''
\newblock in {\em Proc. 14th Eur. Conf. Antennas and Propag. (EuCAP)}, Mar.
  2020, pp. 1--5.

\bibitem{yin2022pilot}
H.~Yin and Y.~Tang,
\newblock ``Pilot aided channel estimation for {AFDM} in doubly dispersive
  channels,''
\newblock in {\em Proc. 2022 IEEE/CIC Int. Conf. on Commun. in China (ICCC)}.
  IEEE, Sep. 2022, pp. 308 -- 313.

\bibitem{luo2024afdm}
Q.~Luo, P.~Xiao, Z.~Liu, Z.~Wan, N.~Thomos, Z.~Gao, and Z.~He,
\newblock ``{AFDM-SCMA: A promising waveform for massive connectivity over high
  mobility channels},''
\newblock {\em IEEE Trans. Wireless Commun.}, vol. 23, no. 10, pp.
  14421--14436, Oct. 2024.

\bibitem{bemani2024integrated}
A.~Bemani, N.~Ksairi, and M.~Kountouris,
\newblock ``Integrated sensing and communications with affine frequency
  division multiplexing,''
\newblock {\em IEEE Wireless Commun. Lett.}, vol. 13, no. 5, pp. 1255 -- 1259,
  Feb. 2024.

\bibitem{zhu2023low}
J.~Zhu, Y.~Tang, X.~Wei, H.~Yin, J.~Du, Z.~Wang, and Y.~Liu,
\newblock ``A low-complexity radar system based on affine frequency division
  multiplexing modulation,''
\newblock {\em arXiv preprint arXiv:2312.11125}, 2023.

\bibitem{zhu2024afdm}
J.~Zhu, Y.~Tang, F.~Liu, X.~Zhang, H.~Yin, and Y.~Zhou,
\newblock ``{AFDM}-based bistatic integrated sensing and communication in
  static scatterer environments,''
\newblock {\em IEEE Wireless Commun. Lett.}, vol. 13, no. 8, pp. 2245 -- 2249,
  Aug. 2024.

\bibitem{shitomi2022spectral}
T.~Shitomi, T.~Nakatogawa, A.~Sato, K.~Furuya, and M.~Okano,
\newblock ``Spectral efficiency evaluation of an {NR}-based {5G} terrestrial
  broadcast system for fixed reception,''
\newblock {\em IEEE Trans. Broadcast.}, vol. 68, no. 2, pp. 487--500, Jun.
  2022.

\bibitem{mark2010principles}
M.~A Richards, J.~A Scheer, and W.~A Holm,
\newblock {\em {Principles of Modern Radar: Basic Principles}},
\newblock USA, NY, New York: Scitech, 2010.

\bibitem{liu2022integrated}
F.~Liu, Y.~Cui, C.~Masouros, J.~Xu, X.~Han, Y.~Eldar, and S.~Buzzi,
\newblock ``{Integrated sensing and communications: Towards dual-functional
  wireless networks for 6G and beyond},''
\newblock {\em IEEE J. Sel. Areas Commun.}, vol. 40, no. 6, pp. 1728--1767,
  Jun. 2022.

\bibitem{guerci2015joint}
J.~R. Guerci, R.~M. Guerci, A.~Lackpour, and D.~Moskowitz,
\newblock ``Joint design and operation of shared spectrum access for radar and
  communications,''
\newblock in {\em Proc. 2015 IEEE Radar Conf. (RadarConf)}. IEEE, May 2015, pp.
  0761--0766.

\bibitem{chiriyath2015inner}
A.~R. Chiriyath, B.~Paul, G.~M. Jacyna, and D.~W. Bliss,
\newblock ``Inner bounds on performance of radar and communications
  co-existence,''
\newblock {\em IEEE Trans. Signal Process.}, vol. 64, no. 2, pp. 464--474, Jan.
  2015.

\bibitem{xiong2023fundamental}
Y.~Xiong, F.~Liu, Y.~Cui, W.~Yuan, T.~X. Han, and G.~Caire,
\newblock ``On the fundamental tradeoff of integrated sensing and
  communications under gaussian channels,''
\newblock {\em IEEE Trans. Info. Theory}, vol. 69, no. 9, pp. 5723 -- 5751,
  Sep. 2023.

\bibitem{dong2023rethinking}
F.~Dong, F.~Liu, S.~Lu, and Y.~Xiong,
\newblock ``Rethinking estimation rate for wireless sensing: {A}
  rate-distortion perspective,''
\newblock {\em IEEE Trans. Veh. Technol.}, vol. 72, no. 12, pp. 16876 -- 16881,
  Jul. 2023.

\bibitem{du2023reshaping}
Z.~Du, F.~Liu, Y.~Xiong, T.~X. Han, Y.~C. Eldar, and S.~Jin,
\newblock ``Reshaping the {ISAC} tradeoff under {OFDM} signaling: {A}
  probabilistic constellation shaping approach,''
\newblock {\em IEEE Trans. Signal Process.}, pp. 1 -- 16, Sep. 2024.

\bibitem{gan2024coverage}
X.~Gan, C.~Huang, Z.~Yang, X.~Chen, J.~He, Z.~Zhang, C.~Yuen, Y.~L. Guan, and
  M.~Debbah,
\newblock ``Coverage and rate analysis for integrated sensing and communication
  networks,''
\newblock {\em IEEE J. Sel. Areas Commun.}, vol. 42, no. 9, pp. 2213 -- 2227,
  Sep. 2024.

\bibitem{choi2023information}
G.~Choi and N.~Lee,
\newblock ``Information-theoretical approach to integrated pulse-{D}oppler
  radar and communication systems,''
\newblock in {\em Proc. 2024 22nd Int. Symp. Modeling Optim. Mobile, Ad Hoc
  Wireless Netw. (WiOpt)}. IEEE, Oct. 2024, pp. 138--145.

\bibitem{wu2022integrating}
K.~Wu, J.~A. Zhang, X.~Huang, and Y.~J. Guo,
\newblock ``Integrating low-complexity and flexible sensing into communication
  systems,''
\newblock {\em IEEE J. Sel. Areas Commun.}, vol. 40, no. 6, pp. 1873--1889,
  Jun. 2022.

\bibitem{feher19871024}
K.~Feher,
\newblock ``1024-{QAM} and 256-{QAM} coded modems for microwave and cable
  system applications,''
\newblock {\em IEEE J. Sel. Areas Commun.}, vol. 5, no. 3, pp. 357--368, Apr.
  1987.

\bibitem{bar1995multitarget}
Y.~Bar-Shalom and X.~R. Li,
\newblock {\em {Multitarget-Multisensor Tracking: Principles and Techniques}},
  vol.~19,
\newblock YBS publishing Storrs, CT, 1995.

\bibitem{zhou2024GI}
Y.~Zhou, H.~Yin, N.~Zhou, Y.~Tang, X.~Zhang, and W.~Yuan,
\newblock ``{GI-F}ree pilot-aided channel estimation for affine frequency
  division multiplexing systems,''
\newblock {\em arXiv preprint arXiv:2404.01088}, 2024.

\bibitem{liu2018mu}
F.~Liu, C.~Masouros, A.~Li, H.~Sun, and L.~Hanzo,
\newblock ``{MU-MIMO} communications with {MIMO} radar: {From} co-existence to
  joint transmission,''
\newblock {\em IEEE Trans. Wireless Commun.}, vol. 17, no. 4, pp. 2755--2770,
  Apr. 2018.

\bibitem{gaudio2020effectiveness}
L.~Gaudio, M.~Kobayashi, G.~Caire, and G.~Colavolpe,
\newblock ``On the effectiveness of {OTFS} for joint radar parameter estimation
  and communication,''
\newblock {\em IEEE Trans. Wireless Commun.}, vol. 19, no. 9, pp. 5951--5965,
  Sep. 2020.

\bibitem{Techplayon20205G}
{Techplayon},
\newblock ``{5G NR cyclic prefix (CP) design},''
  \url{https://www.techplayon.com/5g-nr-cyclic-prefix-cp-design/#google_vignette},
  2020.

\end{thebibliography}

\end{document}